\numberwithin{equation}{section}
\def \i {{\mathrm{i}}}
\def \kr {k_\rho}
\newcommand{\J}[1]{\bd{J}_{#1}}
\newcommand{\bd}[1]{\mathbf{#1}}
\newcommand{\bs}[1]{\boldsymbol{#1}}
\def \lbr {\llbracket}
\def \rbr {\rrbracket}
\def \gu {\bd{g}_{\bd{u}}}
\def \gur {\widehat{\bd{g}}_{\bd{u}}^{\mathrm{r}}}
\def \guh {\widehat{\bd{g}}_{\bd{u}}}
\title{A Matrix Basis Formulation for the Green's Functions of Maxwell's Equations and the Elastic Wave Equations in Layered Media}
\author{ Wenzhong Zhang\thanks{Department of Mathematics, Southern Methodist University, Dallas, TX 75275.}
        \and Bo Wang\thanks{LCSM(MOE), School of Mathematics and Statistics, Hunan Normal University, Changsha, Hunan, 410081, P. R. China.
Department of Mathematics, Southern Methodist University, Dallas, TX 75275. This author acknowledges the financial support provided by NSFC (grant 11771137),
the Construct Program of the Key Discipline in Hunan Province and a Scientific Research Fund of Hunan Provincial Education Department (No. 16B154).}
        \and Wei Cai\thanks{Corresponding author, Department of Mathematics, Southern Methodist University, Dallas, TX 75275({\tt cai@smu.edu}).}
        }
\begin{document}
\maketitle

\begin{abstract}
A matrix basis formulation is introduced to represent the $3 \times 3$ dyadic Green's functions in the frequency domain for the Maxwell's equations and the elastic wave equation in layered media.
The formulation can be used to decompose the Maxwell's Green's functions into independent TE and TM components, each satisfying a Helmholtz equation, and decompose the elastic wave Green's function into the S-wave and the P-wave components.
In addition, a derived vector basis formulation is applied to the case for acoustic wave sources from a non-viscous fluid layer.
\end{abstract}

\section{Introduction}
The layered media dyadic Green's functions (LMDG) for the Maxwell's equations and the elastic wave equations are studied and used in the integral equation solvers for wave fields \cite{chew, cai2013em, duan2018}.
These Green's functions are $3 \times 3$ tensors, governed by the acoustic wave equations and their variants, with certain interface conditions across boundaries of the layers.
In addition, an acoustic wave originating from a source in a non-viscous fluid will be transmitted into elastic waves in a neighboring solid layer. In this case, we will use the Green's function in terms of pressure for the fluid region related to a compression P-wave while a 3-dimensional displacement vector will be used for the dyadic Green's function in the solid layer where an additional S-wave also propagates.

A naive direct derivation of these Green's functions takes all the $9$ entries of the $3 \times 3$ dyadic into consideration for each layer, while the interface conditions will tangle all the entries together.
However, in fact some of the entries are linearly dependent or even identical.
For the Maxwell's LMDG, a number of formulations have been proposed to simplify the derivation, such as the Sommerfeld potential \cite{sommerfeld1964}, the transverse potential \cite{erteza1969,michalski1987}, the Michalski-Zheng formulations \cite{michalski1990}, the $E_z$-$H_z$ formulation \cite{kong1972, chew2006matrix}, etc.
The Sommerfeld potential and the transverse potential reduce the number of entries to be calculated to $5$ while
the $E_z$-$H_z$ approach uses merely $2$ scalar terms, based on the TE/TM mode decomposition.
For the elastic wave equation, the dyadic Green's function for the half-space problem was discussed in \cite{gosling1994}.

The purpose of this paper is to present a new matrix representation of the $3\times 3$ tensor Green's functions using a linear matrix basis, providing an alternative point of view of the previously known approaches.
The matrix basis representation gives a general formulation for the Green's functions of both the Maxwell's equations and the elastic wave equations.
Mathematical theories are then developed to justify the representation in both cases.
There are several remarkable benefits resulting from the matrix basis formulation (MBF).
First, the coefficients of the matrix basis are all rotationally symmetric in the horizontal directions, so that the evaluation of the reflection/transmission coefficients in the layers are simplified.
Second, the Maxwell's Green's functions can be naturally decomposed into independent TE and TM components within this formulation, leading to the 2-term $E_z$-$H_z$ result,  but also with a clearer explanation about the interface treatment. Meanwhile, the elastic wave Green's function is decomposed into S-wave components and P-wave components by matrix rows.
Third, the rotational symmetry allows us to apply fast solvers easily, e.g. the fast multipole method in layered media \cite{bo2019hfmm}.
We also develop a vector basis formulation which is simplified from the matrix version, used for the LMDG of the mix-phase elastic wave equations where the source originates in fluid medium.


The rest of this paper is organized as follows.
In \Cref{sect-2} we establish the theories of the matrix basis, propose the formulation, and provide guidelines about how the theories and the formulation are applied to the LMDG of the Maxwell's equations and the elastic wave equations.
In \Cref{sect-3} the details of the Maxwell's Green's functions in layered media are explained, including a 5-term matrix-based general formulation and the concise 2-term formulation.
In \Cref{sect-4} we discuss the layered elastic wave Green's functions.
A 5-term formulation separating the S-wave and the P-wave by definition is proposed.
Then, the result is generalized to the vector case for sources in fluid media.
The mixed interface conditions between different medium phase types are enumerated in details.

\section{The matrix basis formulation}\label{sect-2}
In this section we set up the matrix basis used for the Green's functions of the Maxwell's equations or the elastic wave equation, and develop basic theories of the basis coefficients.

Suppose in both problems the domain has a total of $L+1$ layers, indexed by $0, \cdots, L$ from top to bottom, separated by the interfaces $z=d_l$, $0 \le l \le L-1$ arranged from large to small.
Suppose $k_i$, $1 \le i \le I$, consist all the wave numbers in these layers.
Note that there are distinct elastic wave numbers for the S-wave and the P-wave in each solid layer.
The interaction between the fixed source $\bd{r}' = (x',y',z')$ and the target $\bd{r}=(x,y,z)$ is studied, assuming both of them do not locate on the interfaces.

Take the 2-D Fourier transform from $(x-x', y-y')$ to $(k_x, k_y)$:
\begin{equation}\label{eq-2DFT}
f(x,y) = \frac{1}{4\pi^2} \iint_{\mathbb{R}^2} e^{\i k_x (x - x') + \i k_y(y - y')} \widehat{f}(k_x, k_y) dk_x dk_y.
\end{equation}
Let $(\kr, \alpha)$ be the polar coordinates of $(k_x, k_y)$.
Let the field $\mathbb{F}_0$ be the field extension from $\mathbb{C}$ with certain functions of $k_x$ and $k_y$
\begin{equation}\label{def-F0}
\mathbb{F}_0 = \mathbb{C} \left( \i \sqrt{k_i^2 - \kr^2}, e^{\i \sqrt{k_i^2 - \kr^2} d_j}, e^{\i \sqrt{k_i^2 - \kr^2} z}, e^{\i \sqrt{k_i^2 - \kr^2} z'}: 1 \le i \le I, 0 \le j \le L-1 \right),
\end{equation}
where other variables are treated as given constants.
Obviously, functions in $\mathbb{F}_0$ do not depend on $\alpha$.
Note that $\kr^2 \in \mathbb{F}_0$ because
\begin{equation*}
\kr^2 = k_i^2 + \left( \i \sqrt{k_i^2 - \kr^2}\right)^2.
\end{equation*}
Then, define the field $\mathbb{F}$ by the 2-term extension
\begin{equation}
\mathbb{F} = \mathbb{F}_0(\i k_x, \i k_y).
\end{equation}

\begin{remark}
The coordinates $z$ and $z'$ are indeed redundant in the definition of $\mathbb{F}_0$ for the matrix basis theory.
They are included only for the convenience of statements in the following sections.
\end{remark}

One of our anticipations on the matrix basis is to represent the tensor Green's functions with all the coefficients in $\mathbb{F}_0$, i.e. the information of the polar angle is only kept in the matrix basis.
For this purpose, based on the observation on calculated formulas \cite{cho2017}, we propose the matrix basis $\J{1}, \cdots, \J{9}$ in the frequency domain as follows.
\begin{proposition}[The matrix basis]\label{prop-mat-basis}These matrices form a basis of $\mathbb{F}^{3 \times 3}$:
\begin{align}\label{mat-J}
\begin{split}
\J{1} & = \begin{bmatrix}
1 & & \\
& 1 & \\
& & 0
\end{bmatrix}, \enspace
\J{2}  = \begin{bmatrix}
0 & & \\
& 0 & \\
& & 1
\end{bmatrix}, \enspace
\J{3}  = \begin{bmatrix}
0 & 0 & \i k_x \\
0 & 0 & \i k_y \\
0 & 0 & 0
\end{bmatrix}, \\
\J{4} & = \begin{bmatrix}
0 & 0 & 0 \\
0 & 0 & 0 \\
\i k_x & \i k_y & 0
\end{bmatrix}, \enspace
\J{5}  = \begin{bmatrix}
-k_x^2 & -k_x k_y & 0 \\
-k_x k_y & -k_y^2 & 0 \\
0 & 0 & 0
\end{bmatrix}, \enspace
\J{6}  = \begin{bmatrix}
0 & 0 & 0 \\
0 & 0 & 0 \\
-\i k_y & \i k_x & 0
\end{bmatrix}, \\
\J{7} & = \begin{bmatrix}
0 & 0 & \i k_y \\
0 & 0 & -\i k_x \\
0 & 0 & 0
\end{bmatrix}, \enspace
\J{8}  = \begin{bmatrix}
k_x k_y & k_y^2 & 0 \\
-k_x^2 & -k_x k_y & 0 \\
0 & 0 & 0
\end{bmatrix}, \enspace
\J{9}  = \begin{bmatrix}
0 & 1 & 0 \\
-1 & 0 & 0 \\
0 & 0 & 0
\end{bmatrix}.
\end{split}
\end{align}
\end{proposition}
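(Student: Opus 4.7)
The plan is to establish that the nine matrices are linearly independent over $\mathbb{F}$, which together with the trivial observation that $\dim_{\mathbb{F}} \mathbb{F}^{3\times 3} = 9$ yields the basis property. The key structural remark is that the supports of the nine matrices partition into four disjoint blocks: \textbf{(A)} the top-left $2\times 2$ block, containing $\J{1}, \J{5}, \J{8}, \J{9}$; \textbf{(B)} the $(3,3)$ entry, containing only $\J{2}$; \textbf{(C)} the last column restricted to rows $1,2$, containing $\J{3}, \J{7}$; and \textbf{(D)} the last row restricted to columns $1,2$, containing $\J{4}, \J{6}$. Setting $\sum_{i=1}^{9} c_i \J{i} = 0$ with $c_i \in \mathbb{F}$ therefore decouples into four independent systems of scalar equations, one per block.

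First I would dispatch the easy blocks. Block (B) instantly gives $c_2 = 0$. For block (C), reading the $(1,3)$ and $(2,3)$ entries yields the system $c_3 k_x + c_7 k_y = 0$ and $c_3 k_y - c_7 k_x = 0$; its coefficient determinant is $-(k_x^2+k_y^2) = -\kr^2$, which lies in $\mathbb{F}_0\setminus\{0\}$, so $c_3 = c_7 = 0$. Block (D) is handled identically from entries $(3,1)$ and $(3,2)$, giving $c_4 = c_6 = 0$.

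The main computational step is block (A). Setting the top-left $2\times 2$ block to zero yields four scalar equations in $c_1, c_5, c_8, c_9$. I would take the trace and the anti-trace of this $2\times 2$ block, and then the sum and the difference of its two off-diagonal entries, to decouple the unknowns into (i) $2c_1 = c_5 \kr^2$, (ii) $2c_9 = -c_8 \kr^2$, and (iii) a homogeneous $2\times 2$ system in $(c_5,c_8)$ with coefficient matrix
\begin{equation*}
\begin{bmatrix}
-2 k_x k_y & k_y^2 - k_x^2 \\ k_y^2 - k_x^2 & 2 k_x k_y
\end{bmatrix},
\end{equation*}
whose determinant equals $-4 k_x^2 k_y^2 - (k_y^2-k_x^2)^2 = -(k_x^2+k_y^2)^2 = -\kr^4$. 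Since $\kr^4 \ne 0$ in $\mathbb{F}_0$, this forces $c_5 = c_8 = 0$, and then (i)--(ii) give $c_1 = c_9 = 0$.

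Rather than a genuine obstacle, the only subtle point is ensuring that $\kr^2$ is truly a nonzero element of $\mathbb{F}_0$, so that $\kr^2$ and $\kr^4$ are invertible in $\mathbb{F}$ and the linear systems above are regular. This is guaranteed by the relation $\kr^2 = k_i^2 + (\i\sqrt{k_i^2-\kr^2})^2$ exhibited in the excerpt together with the fact that the generator $\i\sqrt{k_i^2-\kr^2}$ is, by construction, adjoined as a nonconstant element of $\mathbb{F}_0$, hence its square is not the constant $-k_i^2$. With that in hand, the block-by-block argument above completes the proof of linear independence, and hence of the basis property.
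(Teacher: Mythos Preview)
Your argument is correct and is precisely the straightforward linear-independence check that the paper has in mind: the paper's own proof consists of the single sentence ``The proof is trivial.'' Your block-by-block decoupling and the explicit $2\times 2$ determinant computations are a faithful (and more careful) expansion of that triviality; the only place you go beyond the paper is in justifying $\kr^2\ne 0$, which the paper takes for granted since $\mathbb{F}_0$ is a function field in the variable $\kr$.
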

The proof is trivial.

The matrix basis deserves to be divided into two groups once the products between them are studied.
For any subfield $\mathbb{K} \subset \mathbb{F}$, define vector spaces
\begin{align}
\begin{split}
\mathfrak{R}(\mathbb{K})&= \mathrm{span}_{\mathbb{K}}(\J{1},\cdots,\J{5}), \\
\mathfrak{I}(\mathbb{K})&= \mathrm{span}_{\mathbb{K}}(\J{6},\cdots,\J{9}), \\
\mathfrak{M}(\mathbb{K})&= \mathrm{span}_{\mathbb{K}}(\J{1},\cdots,\J{5},\J{6},\cdots,\J{9}).
\end{split}
\end{align}

\begin{proposition}\label{prop-mat-ring}
Let $\mathbb{K}$ be any subfield of $\mathbb{F}$ containing $\kr^2$.
Then,
\begin{itemize}
\item $\mathfrak{M}(\mathbb{K}) = \mathfrak{R}(\mathbb{K}) \oplus \mathfrak{I}(\mathbb{K})$ is the direct sum.
\item $\mathfrak{R}(\mathbb{K}), \mathfrak{M}(\mathbb{K})$ are rings with matrix addition and matrix multiplication.
\end{itemize}
\end{proposition}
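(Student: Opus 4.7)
The plan is to prove the two assertions separately: directness by linear algebra, and the ring properties by enumeration of products organized via a block decomposition.

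\emph{Direct sum.} By \Cref{prop-mat-basis} the matrices $\J{1},\ldots,\J{9}$ are $\mathbb{F}$-linearly independent, and since $\mathbb{K}\subset\mathbb{F}$ any $\mathbb{K}$-linear dependence would also be an $\mathbb{F}$-linear dependence. Thus they are $\mathbb{K}$-linearly independent as well, which forces $\mathfrak{R}(\mathbb{K})\cap\mathfrak{I}(\mathbb{K})=\{0\}$. Combined with the definition-level identity $\mathfrak{M}(\mathbb{K})=\mathfrak{R}(\mathbb{K})+\mathfrak{I}(\mathbb{K})$, this yields the direct-sum decomposition.

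\emph{Ring property.} Distributivity and associativity are inherited from $\mathbb{F}^{3\times 3}$, additive closure is trivial, and $\J{1}+\J{2}$ is the $3\times 3$ identity (which sits in both $\mathfrak{R}$ and $\mathfrak{M}$), so the only remaining task is to verify multiplicative closure on the basis. The cleanest bookkeeping uses the $2{+}1$ block decomposition $\bigl(\begin{smallmatrix}A & b\\ c^{T} & d\end{smallmatrix}\bigr)$ with $A\in\mathbb{F}^{2\times 2}$, $b,c\in\mathbb{F}^{2}$, and $d\in\mathbb{F}$. Setting $\vec{k}=(k_{x},k_{y})^{T}$ and letting $R$ be the $2\times 2$ rotation with $R^{2}=-I_{2}$ and $R^{T}=-R$, inspection of \eqref{mat-J} identifies $\mathfrak{R}(\mathbb{K})$ as those matrices with $A\in\mathrm{span}_{\mathbb{K}}(I_{2},\vec{k}\vec{k}^{T})$, with $b,c$ proportional to $\i\vec{k}$, and with $d\in\mathbb{K}$; and $\mathfrak{I}(\mathbb{K})$ as those with $A\in\mathrm{span}_{\mathbb{K}}(R,(R\vec{k})\vec{k}^{T})$, $b$ proportional to $\i R\vec{k}$, $c$ proportional to $\i R^{T}\vec{k}$, and $d=0$. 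Expanding the standard block product, the only scalar quantities that arise are integer coefficients together with the pairings $\vec{k}^{T}\vec{k}=\kr^{2}$ and $\vec{k}^{T}R\vec{k}=0$, along with rank-one identities such as $\vec{k}\vec{k}^{T}\cdot\vec{k}\vec{k}^{T}=\kr^{2}\vec{k}\vec{k}^{T}$ and $(R\vec{k})(R^{T}\vec{k})^{T}=\vec{k}\vec{k}^{T}-\kr^{2}I_{2}$. Because $\kr^{2}\in\mathbb{K}$, every resulting coefficient lies in $\mathbb{K}$, and a short case check then yields $\mathfrak{R}\cdot\mathfrak{R}\subset\mathfrak{R}$ together with $\mathfrak{R}\cdot\mathfrak{I},\,\mathfrak{I}\cdot\mathfrak{R},\,\mathfrak{I}\cdot\mathfrak{I}\subset\mathfrak{M}$; in fact $R^{2}=-I_{2}$ forces $\mathfrak{I}\cdot\mathfrak{I}\subset\mathfrak{R}$ as a byproduct.

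The main obstacle is bookkeeping rather than conceptual difficulty: in principle one traverses all 81 pairwise products $\J{i}\J{j}$, but the block decomposition collapses each into one of a handful of elementary identities in $\vec{k}$, $R\vec{k}$, and $\kr^{2}$. The hypothesis $\kr^{2}\in\mathbb{K}$ is precisely what keeps these coefficients in $\mathbb{K}$, and its necessity is already visible in products such as $\J{5}^{2}=-\kr^{2}\J{5}$.
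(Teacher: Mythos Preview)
Your proof is correct and takes a genuinely different route from the paper. The paper simply displays the full $9\times 9$ product table $\J{i}\J{j}$ in \cref{eq-matrix-product-table} and reads off that every entry is a $\mathbb{K}$-combination of the $\J{l}$ with coefficients in $\{0,\pm 1,\pm\kr^{2}\}$; closure of $\mathfrak{R}$ and $\mathfrak{M}$ is then immediate by inspection. You instead organize the computation through the $2{+}1$ block structure and the dictionary $\{I_{2},\vec{k}\vec{k}^{T}\}$ versus $\{R,(R\vec{k})\vec{k}^{T}\}$, reducing all 81 cases to a short list of scalar and rank-one identities in $\vec{k}$, $R\vec{k}$, and $\kr^{2}$.

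Your approach is more conceptual and makes the ``real/imaginary'' parity of \Cref{prop-mat-product-1} transparent: the parity is carried by the count of $R$'s, and $R^{2}=-I_{2}$ explains why $\mathfrak{I}\cdot\mathfrak{I}\subset\mathfrak{R}$. The paper's brute-force table is less elegant but has a practical payoff you should be aware of: the explicit coefficients in \cref{eq-matrix-product-table} are reused verbatim in the proof of \Cref{thm-solution-filtering}, where the ``stacked'' linear system $\widetilde{\bd{A}}\widetilde{\bd{X}}=\widetilde{\bd{B}}$ is assembled directly from those entries. If you adopt your block argument here, you would still need to extract the individual structure constants later, so in the end the table has to be written down somewhere.
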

\begin{proof}
The direct sum is obvious.
For the ring property, notice the identity matrix
\begin{equation}
\bd{I} = \J{1} + \J{2} \in \mathfrak{R}(\mathbb{K}) \subset \mathfrak{M}(\mathbb{K}),
\end{equation}
and the product table of the matrices $\J{1},\cdots,\J{9}$
\begin{align}\label{eq-matrix-product-table}
\begin{split}
{}&\begin{bmatrix}
\J{1}^T & \cdots & \J{9}^T
\end{bmatrix}^T \cdot \begin{bmatrix}
\J{1} & \cdots & \J{9}
\end{bmatrix} = \\
& \left[\begin{array}{ccccc:cccc}
\J{1} & \bd{0} & \J{3} & \bd{0} & \J{5} & \bd{0} & \J{7} & \J{8} & \J{9} \\
\bd{0} & \J{2} & \bd{0} & \J{4} & \bd{0} & \J{6} & \bd{0} & \bd{0} & \bd{0} \\
\bd{0} & \J{3} & \bd{0} & \J{5} & \bd{0} & \J{8}-\kr^2 \J{9} & \bd{0} & \bd{0} & \bd{0} \\
\J{4} & \bd{0} & -\kr^2\J{2} & \bd{0} & -\kr^2\J{4} & \bd{0} & \bd{0} & \bd{0} & \J{6} \\
\J{5} & \bd{0} & -\kr^2\J{3} & \bd{0} & -\kr^2\J{5} & \bd{0} & \bd{0} & \bd{0} & \J{8}-\kr^2 \J{9} \\ \hdashline
\J{6} & \bd{0} & \bd{0} & \bd{0} & \bd{0} & \bd{0} & \kr^2 \J{2} & -\kr^2 \J{4} & -\J{4} \\
\bd{0} & \J{7} & \bd{0} & -\J{8} & \bd{0} & \kr^2 \J{1}+\J{5} & \bd{0} & \bd{0} & \bd{0} \\
\J{8} & \bd{0} & \kr^2\J{7} & \bd{0} & -\kr^2\J{8} & \bd{0} & \bd{0} & \bd{0} & -\kr^2 \J{1}-\J{5} \\
\J{9} & \bd{0} & \J{7} & \bd{0} & -\J{8} & \bd{0} & -\J{3} & \J{5} & -\J{1}
\end{array}\right]
\end{split}
\end{align}
which ensures the matrix multiplication is closed in either $\mathfrak{M}(\mathbb{K})$ or $\mathfrak{R}(\mathbb{K})$.
\end{proof}
The product table \cref{eq-matrix-product-table} immediately implies the product rules below.
\begin{proposition}[The product rules]\label{prop-mat-product-1}
Let $\mathbb{K}$ be any subfield of $\mathbb{F}$ containing $\kr^2$.
\begin{itemize}
\item If $ {\bd{A}} \in \mathfrak{R}(\mathbb{K})$, $ {\bd{B}} \in \mathfrak{R}(\mathbb{K})$, then $ {\bd{A}}\cdot {\bd{B}} \in \mathfrak{R}(\mathbb{K})$.
\item If $ {\bd{A}} \in \mathfrak{R}(\mathbb{K})$, $ {\bd{B}} \in \mathfrak{I}(\mathbb{K})$, then $ {\bd{A}}\cdot {\bd{B}} \in \mathfrak{I}(\mathbb{K})$.
\item If $ {\bd{A}} \in \mathfrak{I}(\mathbb{K})$, $ {\bd{B}} \in \mathfrak{R}(\mathbb{K})$, then $ {\bd{A}}\cdot {\bd{B}} \in \mathfrak{I}(\mathbb{K})$.
\item If $ {\bd{A}} \in \mathfrak{I}(\mathbb{K})$, $ {\bd{B}} \in \mathfrak{I}(\mathbb{K})$, then $ {\bd{A}}\cdot {\bd{B}} \in \mathfrak{R}(\mathbb{K})$.
\end{itemize}
\end{proposition}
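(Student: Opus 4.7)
The plan is to read off the result directly from the $9 \times 9$ product table \cref{eq-matrix-product-table} that was already assembled in the proof of \Cref{prop-mat-ring}. The table is laid out with a dashed partition separating the block corresponding to $\J{1},\ldots,\J{5}$ (the generators of $\mathfrak{R}(\mathbb{K})$) from the block corresponding to $\J{6},\ldots,\J{9}$ (the generators of $\mathfrak{I}(\mathbb{K})$), so the four product rules correspond exactly to the four sub-blocks of the table.

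First I would write $\bd{A} = \sum_{i \in S_A} a_i \J{i}$ and $\bd{B} = \sum_{j \in S_B} b_j \J{j}$ with coefficients $a_i, b_j \in \mathbb{K}$, where $S_A, S_B \subseteq \{1,\ldots,9\}$ are the index sets appropriate to whether each factor lies in $\mathfrak{R}(\mathbb{K})$ or $\mathfrak{I}(\mathbb{K})$. By bilinearity of matrix multiplication,
\begin{equation*}
\bd{A} \cdot \bd{B} = \sum_{i \in S_A} \sum_{j \in S_B} a_i b_j \, \J{i} \J{j},
\end{equation*}
so it suffices to show that each individual product $\J{i} \J{j}$ lies in the claimed subspace with coefficients in $\mathbb{K}$.

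Next I would inspect the four sub-blocks of \cref{eq-matrix-product-table}. The top-left $5\times 5$ block contains only entries of the form $c\,\J{k}$ with $k \in \{1,\ldots,5\}$ and $c \in \{0, 1, -\kr^2\}$, which establishes $\mathfrak{R}(\mathbb{K}) \cdot \mathfrak{R}(\mathbb{K}) \subseteq \mathfrak{R}(\mathbb{K})$. The top-right $5 \times 4$ and bottom-left $4 \times 5$ blocks contain only entries of the form $c_1 \J{k_1} + c_2 \J{k_2}$ with $k_1, k_2 \in \{6,\ldots,9\}$ and constants $c_i \in \{0, \pm 1, \pm \kr^2\}$, yielding the two mixed rules. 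Finally, the bottom-right $4 \times 4$ block again produces only combinations of $\J{1},\ldots,\J{5}$ with coefficients in $\{0, \pm 1, \pm \kr^2\}$, which gives $\mathfrak{I}(\mathbb{K}) \cdot \mathfrak{I}(\mathbb{K}) \subseteq \mathfrak{R}(\mathbb{K})$.

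There is essentially no obstacle beyond the block-structure inspection; the only nontrivial point to flag is the hypothesis $\kr^2 \in \mathbb{K}$, which is precisely what is needed to guarantee that the coefficients $\pm \kr^2$ appearing in the product table remain inside $\mathbb{K}$, so that the resulting linear combinations genuinely lie in $\mathfrak{R}(\mathbb{K})$ or $\mathfrak{I}(\mathbb{K})$ rather than in a larger space. With that observation, the proof reduces to citing the appropriate block of \cref{eq-matrix-product-table}.
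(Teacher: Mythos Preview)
Your proposal is correct and matches the paper's own treatment: the paper simply states that the product table \cref{eq-matrix-product-table} immediately implies the product rules, and your argument is exactly the unpacking of that claim via bilinearity and the block structure of the table. Your explicit remark that the hypothesis $\kr^2 \in \mathbb{K}$ is what keeps the coefficients inside $\mathbb{K}$ is the one point worth making, and you have made it.
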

The behavior resembles the real numbers and the imaginary numbers, which is why the letters $\mathfrak{R}$ and $\mathfrak{I}$ are used here.

\begin{definition}[The matrix basis formulation]\label{def-MBF}Define the linear space
\begin{equation}\label{eq-R0}
\mathfrak{R}^0 = \mathrm{span}_{\mathbb{F}_0}\{\J{1},\cdots,\J{5}\} = \left\{ \sum_{j=1}^5 a_j \J{j}: a_j \in \mathbb{F}_0 \right\}.
\end{equation}
The linear expansion of functions in $\mathfrak{R}^0$ with basis $\J{1}, \cdots, \J{5}$ by definition is called the matrix basis formulation.
\end{definition}

In future sections we will claim that the matrix basis formulation can be used to efficiently solve the Green's functions for the Maxwell's equations and the elastic wave equation in each layer.
Since the tensors are across the layers and to be solved together, theories of the block matrices are also necessary.

For any subfield $\mathbb{K} \subset \mathbb{F}$ and any $p,q \in \mathbb{N}$, define the linear spaces of block matrices
\begin{align}\label{eq-block-matrices}
\begin{split}
\mathfrak{M}_{p \times q}(\mathbb{K}) &= \left\{ \sum_{j=1}^{9} \bd{K}_j \otimes \J{j} : \bd{K}_j \in \mathbb{K}^{p \times q}, 1 \le j \le 9 \right\}, \\
\mathfrak{R}_{p \times q}(\mathbb{K}) &= \left\{ \sum_{j=1}^{5} \bd{K}_j \otimes \J{j} : \bd{K}_j \in \mathbb{K}^{p \times q}, 1 \le j \le 5 \right\}, \\
\mathfrak{I}_{p \times q}(\mathbb{K}) &= \left\{ \sum_{j=6}^{9} \bd{K}_j \otimes \J{j} : \bd{K}_j \in \mathbb{K}^{p \times q}, 6 \le j \le 9 \right\},
\end{split}
\end{align}
where $\otimes$ is the Kronecker product.
It is straightforward that in the above definitions the decompositions are unique (since in each $3 \times 3$ block it's unique), and that
\begin{equation}
\mathfrak{M}_{p \times q}(\mathbb{K}) = \mathfrak{R}_{p \times q}(\mathbb{K}) \oplus \mathfrak{I}_{p \times q}(\mathbb{K})
\end{equation}
is the direct sum.
Moreover, the product rules are easily generalized to block matrices.
\begin{proposition}[The product rules for block matrices]\label{lemma-product-rule-block}Let $p,q,r \in \mathbb{N}$.
Let $\mathbb{K}$ be any subfield of $\mathbb{F}$ containing $\kr^2$.
\begin{itemize}
\item If $\bar{\bd{A}} \in \mathfrak{R}_{p\times r}(\mathbb{K})$, $\bar{\bd{B}} \in \mathfrak{R}_{r\times q}(\mathbb{K})$, then $\bar{\bd{A}}\cdot\bar{\bd{B}} \in \mathfrak{R}_{p\times q}(\mathbb{K})$.
\item If $\bar{\bd{A}} \in \mathfrak{R}_{p\times r}(\mathbb{K})$, $\bar{\bd{B}} \in \mathfrak{I}_{r\times q}(\mathbb{K})$, then $\bar{\bd{A}}\cdot\bar{\bd{B}} \in \mathfrak{I}_{p\times q}(\mathbb{K})$.
\item If $\bar{\bd{A}} \in \mathfrak{I}_{p\times r}(\mathbb{K})$, $\bar{\bd{B}} \in \mathfrak{R}_{r\times q}(\mathbb{K})$, then $\bar{\bd{A}}\cdot\bar{\bd{B}} \in \mathfrak{I}_{p\times q}(\mathbb{K})$.
\item If $\bar{\bd{A}} \in \mathfrak{I}_{p\times r}(\mathbb{K})$, $\bar{\bd{B}} \in \mathfrak{I}_{r\times q}(\mathbb{K})$, then $\bar{\bd{A}}\cdot\bar{\bd{B}} \in \mathfrak{R}_{p\times q}(\mathbb{K})$.
\end{itemize}
\end{proposition}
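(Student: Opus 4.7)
The plan is to lift Proposition \ref{prop-mat-product-1} to the block-matrix setting using the mixed-product property of the Kronecker product, $(\bd{K} \otimes \bd{M})(\bd{L} \otimes \bd{N}) = (\bd{K} \bd{L}) \otimes (\bd{M} \bd{N})$. Given $\bar{\bd{A}} = \sum_j \bd{K}_j \otimes \J{j}$ and $\bar{\bd{B}} = \sum_k \bd{L}_k \otimes \J{k}$ with $\bd{K}_j, \bd{L}_k \in \mathbb{K}^{p\times r}, \mathbb{K}^{r \times q}$ respectively, bilinearity of the matrix product together with this identity gives
\begin{equation*}
\bar{\bd{A}} \cdot \bar{\bd{B}} = \sum_{j,k} (\bd{K}_j \bd{L}_k) \otimes (\J{j} \J{k}).
\end{equation*}

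Next, I would invoke the product table \cref{eq-matrix-product-table}: each $\J{j} \J{k}$ equals $c_{jk} \J{m(j,k)}$ for some index $m(j,k)$ and some coefficient $c_{jk} \in \{0, \pm 1, \pm \kr^2\} \subset \mathbb{K}$ (in a couple of entries the product is a $\mathbb{K}$-linear combination of two basis matrices, e.g.\ $\J{3}\J{6} = \J{8} - \kr^2 \J{9}$, but the same argument applies term by term). Substituting gives
\begin{equation*}
\bar{\bd{A}} \cdot \bar{\bd{B}} = \sum_{j,k} (c_{jk} \bd{K}_j \bd{L}_k) \otimes \J{m(j,k)},
\end{equation*}
where the coefficient blocks $c_{jk} \bd{K}_j \bd{L}_k$ lie in $\mathbb{K}^{p\times q}$ because $\mathbb{K}$ is a field containing $\kr^2$, and hence contains every $c_{jk}$.

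Finally I would read off the four cases from Proposition \ref{prop-mat-product-1}: when both $\bar{\bd{A}}, \bar{\bd{B}}$ belong to $\mathfrak{R}_{\cdot \times \cdot}(\mathbb{K})$ only indices $j,k \in \{1,\dots,5\}$ appear, and all resulting $\J{m(j,k)}$ land in $\{\J{1},\dots,\J{5}\}$, so the sum lies in $\mathfrak{R}_{p \times q}(\mathbb{K})$; the other three cases are identical in spirit, with the placement of $m(j,k)$ among the ``real'' versus ``imaginary'' basis indices dictated by Proposition \ref{prop-mat-product-1}. There is no real obstacle here: the only thing to be slightly careful about is to ensure that the coefficient blocks stay inside $\mathbb{K}^{p\times q}$ after absorbing the scalars $c_{jk}$, which is exactly why the hypothesis $\kr^2 \in \mathbb{K}$ is imposed; everything else is a bookkeeping translation of the scalar result via the Kronecker formalism.
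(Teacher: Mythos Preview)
Your argument is correct and is exactly the natural generalization the paper has in mind: the paper does not write out a proof for this proposition at all, merely remarking that ``the product rules are easily generalized to block matrices'' after establishing Proposition~\ref{prop-mat-product-1} and the product table~\cref{eq-matrix-product-table}. Your use of the mixed-product identity for the Kronecker product together with the table is precisely how that generalization goes, and your care about the two-term entries and the role of $\kr^2 \in \mathbb{K}$ is appropriate.
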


The following theorem will lead to the main result of this section.
\begin{theorem}[Solution filtering]\label{thm-solution-filtering}Suppose $p,q,r \in \mathbb{N}$, the block matrices $\bar{\bd{A}} \in \mathfrak{R}_{p \times r}(\mathbb{F}_0)$, $\bar{\bd{X}} \in \mathfrak{M}_{r \times q}(\mathbb{F})$ and $\bar{\bd{B}} \in \mathfrak{R}_{p \times q}(\mathbb{F}_0)$ satisfy $\bar{\bd{A}} \cdot \bar{\bd{X}} = \bar{\bd{B}}$.
Then, there exists a ``filtered'' block matrix $\bar{\bd{X}}_0 \in \mathfrak{R}_{r \times q}(\mathbb{F}_0)$, i.e. each block of $\bar{\bd{X}}_0$ has the matrix basis representation, so that $\bar{\bd{A}} \cdot \bar{\bd{X}}_0 = \bar{\bd{B}}$.
\end{theorem}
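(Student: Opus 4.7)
The plan is to perform the filtering in two stages: first strip off the $\mathfrak{I}$-part of $\bar{\bd{X}}$, and then descend the remaining coefficients from $\mathbb{F}$ down to $\mathbb{F}_0$.

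For the first stage I will decompose $\bar{\bd{X}} = \bar{\bd{X}}_R + \bar{\bd{X}}_I$ with $\bar{\bd{X}}_R \in \mathfrak{R}_{r\times q}(\mathbb{F})$ and $\bar{\bd{X}}_I \in \mathfrak{I}_{r\times q}(\mathbb{F})$ using the direct sum. Since $\bar{\bd{A}} \in \mathfrak{R}_{p\times r}(\mathbb{F}_0) \subset \mathfrak{R}_{p\times r}(\mathbb{F})$, \Cref{lemma-product-rule-block} places $\bar{\bd{A}}\bar{\bd{X}}_R$ in $\mathfrak{R}_{p\times q}(\mathbb{F})$ and $\bar{\bd{A}}\bar{\bd{X}}_I$ in $\mathfrak{I}_{p\times q}(\mathbb{F})$. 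Because $\bar{\bd{B}} \in \mathfrak{R}_{p\times q}(\mathbb{F}_0) \subset \mathfrak{R}_{p\times q}(\mathbb{F})$, uniqueness of the $\mathfrak{R}/\mathfrak{I}$ decomposition collapses the equation to $\bar{\bd{A}}\bar{\bd{X}}_R = \bar{\bd{B}}$ and $\bar{\bd{A}}\bar{\bd{X}}_I = \bd{0}$, so I may discard $\bar{\bd{X}}_I$ and assume from now on that the solution already lies in $\mathfrak{R}_{r\times q}(\mathbb{F})$.

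For the second stage I will recast the equation as a linear system with entries in $\mathbb{F}_0$. Writing $\bar{\bd{A}} = \sum_{i=1}^5 \bd{A}_i \otimes \J{i}$ and $\bar{\bd{B}} = \sum_{k=1}^5 \bd{B}_k \otimes \J{k}$ with $\bd{A}_i,\bd{B}_k$ over $\mathbb{F}_0$, and $\bar{\bd{X}}_R = \sum_{j=1}^5 \bd{K}_j \otimes \J{j}$ with $\bd{K}_j$ a priori over $\mathbb{F}$, the product table~\eqref{eq-matrix-product-table} yields $\J{i}\J{j} = \sum_{k=1}^5 c_{ij}^k \J{k}$ for $i,j \in \{1,\ldots,5\}$, with structure constants $c_{ij}^k \in \{0,\pm 1,\pm\kr^2\} \subset \mathbb{F}_0$. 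Using the Kronecker mixed-product identity and matching $\J{k}$-coefficients by uniqueness of the matrix basis decomposition, the equation $\bar{\bd{A}}\bar{\bd{X}}_R = \bar{\bd{B}}$ is equivalent to the five matrix identities $\sum_{i,j=1}^5 c_{ij}^k \bd{A}_i \bd{K}_j = \bd{B}_k$, all of whose coefficients and right-hand sides live in $\mathbb{F}_0$.

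After vectorizing the unknowns $\bd{K}_1,\ldots,\bd{K}_5$, this becomes a single linear system $M\bd{k} = \bd{b}$ with $M$ and $\bd{b}$ over $\mathbb{F}_0$, which by assumption has a solution over $\mathbb{F}$. Since the ranks of $M$ and of the augmented matrix $[M\,|\,\bd{b}]$ are computed by the same determinantal minors whether one works over $\mathbb{F}_0$ or $\mathbb{F}$, consistency over $\mathbb{F}$ implies consistency over $\mathbb{F}_0$, and any $\mathbb{F}_0$-solution $\bd{k}^0$ recovers the desired filtered matrix $\bar{\bd{X}}_0 = \sum_{j=1}^5 \bd{K}_j^0 \otimes \J{j} \in \mathfrak{R}_{r\times q}(\mathbb{F}_0)$. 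The crux of the whole argument is this field-descent step, and it is enabled entirely by the observation that $\kr^2 \in \mathbb{F}_0$; without that the structure constants $c_{ij}^k$ would escape $\mathbb{F}_0$ and the rewriting in the second stage would break down. Everything else is routine bookkeeping with Kronecker products and the direct sum.
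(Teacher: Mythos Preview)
Your proof is correct and follows essentially the same two-stage strategy as the paper: first strip the $\mathfrak{I}$-part of $\bar{\bd{X}}$ via the direct-sum decomposition and the block product rules, then rewrite the remaining equation as a linear system with coefficients in $\mathbb{F}_0$ and descend the solution from $\mathbb{F}$ to $\mathbb{F}_0$. The only difference is in the final justification of the field-descent: the paper carries it out constructively by rank-factorizing the stacked coefficient matrix $\widetilde{\bd{A}}$ over $\mathbb{F}_0$ and writing down an explicit $\mathbb{F}_0$-solution, whereas you invoke the equivalent abstract fact that rank (and hence consistency) of a linear system is unchanged under field extension.
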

\begin{proof}
We filter the solution from $\mathfrak{M}_{r \times q}(\mathbb{F})$ to $\mathfrak{R}_{r \times q}(\mathbb{F}_0)$ with an intermediate step in $\mathfrak{R}_{r \times q}(\mathbb{F})$.

First, write the direct sum decomposition $\bar{\bd{X}} = \bar{\bd{X}}_1 \oplus \bar{\bd{X}}_2$, where $\bar{\bd{X}}_1 \in \mathfrak{R}_{r \times q}(\mathbb{F})$ and $\bar{\bd{X}}_2 \in \mathfrak{I}_{r \times q}(\mathbb{F})$.
By \cref{lemma-product-rule-block} we immediately get $\bar{\bd{A}}\cdot \bar{\bd{X}}_1 \in \mathfrak{R}_{p \times q}(\mathbb{F})$ and $\bar{\bd{A}}\cdot \bar{\bd{X}}_2 \in \mathfrak{I}_{p \times q}(\mathbb{F})$, so
\begin{equation}
\bar{\bd{A}}\cdot \bar{\bd{X}}_1 + \bar{\bd{A}}\cdot \bar{\bd{X}}_2
\end{equation}
is the direct sum decomposition of $\bar{\bd{B}} \in \mathfrak{M}_{p \times q}(\mathbb{F})$.
Therefore $\bar{\bd{A}}\cdot \bar{\bd{X}}_1 = \bar{\bd{B}}$.

Then, let $$\bar{\bd{A}} = \sum_{j=1}^{5}\bd{A}_j \otimes \bd{J}_j,\quad \bar{\bd{X}}_1 = \sum_{j=1}^{5}\bd{X}_j^1 \otimes \bd{J}_j, \quad \bar{\bd{B}} = \sum_{j=1}^{5}\bd{B}_j \otimes \bd{J}_j $$
where each $\bd{A}_j \in \mathbb{F}_0^{p \times r}$, $\bd{X}_j^1 \in \mathbb{F}^{r \times q}$ and $\bd{B}_j \in \mathbb{F}_0^{p \times q}$.
When treating $\bd{X}_j^1$ as the solution to the linear equation $\bar{\bd{A}}\cdot\bar{\bd{X}}_1 =\bar{\bd{B}}$, the equation is equivalent to
\begin{equation}
\sum_{u=1}^5 \sum_{v=1}^5 (\bd{A}_u \bd{X}_v^1) \otimes (\J{u}\J{v}) = \sum_{j=1}^5 \bd{B}_j \otimes \J{j}
\end{equation}
which, by the product table \cref{eq-matrix-product-table}, is indeed equivalent to the linear system $\widetilde{\bd{A}}\widetilde{\bd{X}}^1 = \widetilde{\bd{B}}$, where a \emph{stacked} form of $\bar{\bd{X}}^1$ is used
\begin{equation}
\widetilde{\bd{A}} = \begin{bmatrix}
\bd{A}_1 & \bd{0} & \bd{0} & \bd{0} & \bd{0} \\
\bd{0} & \bd{A}_2 & -\kr^2\bd{A}_4 & \bd{0} & \bd{0}\\
\bd{0} & \bd{A}_3 & \bd{A}_1 - \kr^2 \bd{A}_5 & \bd{0} & \bd{0} \\
\bd{A}_4 & \bd{0} & \bd{0} & \bd{A}_2 & -\kr^2\bd{A}_4 \\
\bd{A}_5 & \bd{0} & \bd{0} & \bd{A}_3 & \bd{A}_1 - \kr^2 \bd{A}_5
\end{bmatrix},\quad
\widetilde{\bd{X}}^1 = \begin{bmatrix}
\bd{X}_1^1 \\ \vdots \\ \bd{X}_5^1
\end{bmatrix},\quad
\widetilde{\bd{B}} = \begin{bmatrix}
\bd{B}_1 \\ \vdots \\ \bd{B}_5
\end{bmatrix}.
\end{equation}
Let $a \le \min(5p,5r)$ be the rank of $\widetilde{\bd{A}} \in \mathbb{F}_0^{5p \times 5r}$.
The diagonalization of $\widetilde{\bd{A}}$ implies the existence of full-rank matrices $\bd{S} \in \mathbb{F}_0^{5p \times 5p}$ and $\bd{T} \in \mathbb{F}_0^{5r \times 5r}$ such that
\begin{equation}
\widetilde{\bd{A}} = \bd{S}\cdot \begin{bmatrix}
\bd{I}_a & \bd{0}_{a \times (5r-a)} \\
\bd{0}_{(5p-a)\times a} & \bd{0}_{(5p-a)\times(5r-a)}
\end{bmatrix}\cdot \bd{T},
\end{equation}
so the entries lower than the $a$-th row of $\bd{S}^{-1}\widetilde{\bd{B}}$ are all zero, and
\begin{equation}
\widetilde{\bd{X}}^0 = \bd{T}^{-1}\cdot \begin{bmatrix}
\begin{bmatrix}
\bd{I}_a & \bd{0}_{a\times(5p-a)}
\end{bmatrix} \cdot \left(\bd{S}^{-1}\widetilde{\bd{B}}\right) \\
\bd{0}_{(5r-a)\times q}
\end{bmatrix}\in\mathbb{F}_0^{5r \times q}
\end{equation}
will satisfy $\widetilde{\bd{A}}\widetilde{\bd{X}}^0=\widetilde{\bd{B}}$.
Write $\widetilde{\bd{X}}^0$ in the stacked form
\begin{equation}
\tilde{\bd{X}}^0 = \begin{bmatrix}
\bd{X}_1^0 \\ \vdots \\ \bd{X}_5^0
\end{bmatrix}
\end{equation}
where each $\bd{X}_j^0 \in \mathbb{F}_0^{r\times q}$, the matrix
\begin{equation}
\bar{\bd{X}}_0 = \sum_{j=1}^{5}\bd{X}_j^0 \otimes \J{j} \in \mathfrak{R}_{r \times q}(\mathbb{F}_0)
\end{equation}
is as desired.
\end{proof}

The coming discussion in later sections on the LMDG of the Maxwell's equations and the elastic wave equation are generally presented in the following pattern: the restricting equations of the problem, including linear equations derived from the interface conditions and the radiation conditions, are re-formatted using the matrix basis $\J{1}, \cdots, \J{9}$.
The solution filtering theorem then works on the linear system of the tensors in the layers, so that filtered solutions are proven available using the matrix basis formulation of $\mathfrak{R}^0$.
Finally, the formulation helps simplify the restricting equations, leaving only the basis coefficients to be solved.

\section{Application to the Maxwell's equations in layered media}\label{sect-3}
In this section we first give a brief introduction about the dyadic Green's functions of the time-harmonic Maxwell's equations in the free space, then we will discuss the Green's functions in layered media, and the simplification using the matrix basis formulation \cref{eq-R0}.

\subsection{The dyadic Green's functions of the Maxwell's equations in the free space}
Suppose in the free space the medium has constant permittivity $\varepsilon$ and constant permeability $\mu$.
Assuming the time dependence is harmonic, i.e. in terms of $\exp(\i \omega t)$, the source-free Maxwell's equations in the free space is simplified in the $\omega$-domain of the Fourier transform as the equations of the electric displacement flux $\vec{D}(\bd{r})$, the electric field $\vec{E}(\bd{r})$, the magnetic flux density $\vec{B}(\bd{r})$ and the magnetic field $\vec{H}(\bd{r})$, namely,
\begin{align}\label{eq-Maxwell-free-space}
\begin{split}
\vec{D} &= \varepsilon \vec{E} \\
\vec{B} &= \mu \vec{H} \\
\nabla \times \vec{E} &= -\i \omega \mu \vec{H}\\
\nabla \times \vec{H} &= \i\omega \varepsilon\vec{E} \\
\nabla \cdot \vec{D} &= 0 \\
\nabla \cdot \vec{B} &= 0
\end{split}
\end{align}
and the wave number is given by
\begin{equation}
k = \sqrt{\omega^2\varepsilon\mu}.
\end{equation}
When dealing with these equations, the Lorenz gauge condition is often introduced, which allows us to use a vector potential $\vec{A}(\bd{r})$ to represent the electric field $\vec{E}$ and the magnetic field $\vec{H}$ as
\begin{equation}
\vec{E} = -\i \omega \left(\bd{I} + \frac{\nabla\nabla}{k^2}\right)\vec{A},\quad \vec{H} = \frac{1}{\mu} \nabla \times \vec{A},
\end{equation}
and the flux vectors $\vec{D}$ and $\vec{B}$ are replaced using $\varepsilon\vec{E}$ and $\mu\vec{H}$, respectively.
The vector potential satisfies the Helmholtz equation
\begin{equation}
\nabla^2 \vec{A} + k^2 \vec{A} = \vec{0}.
\end{equation}
The choice of the vector potential is \emph{not} unique.
Indeed, for any function $\phi \in C^2(\mathbb{R}^3)$ satisfying the Helmholtz equation $\nabla^2 \phi + k^2 \phi = 0$, $\vec{A} + \nabla \phi$ can be used to replace $\vec{A}$ in the above identities.

The dyadic Green's functions for the free space Maxwell's equations are defined using a $3 \times 3$ potential tensor $\bd{G}_A(\bd{r};\bd{r}')$ such that the electric field dyadic Green's function $\bd{G}_E(\bd{r};\bd{r}')$ and the magnetic field dyadic Green's function $\bd{G}_H(\bd{r};\bd{r}')$ are represented by
\begin{equation}\label{eq-GEGH-GA}
\bd{G}_{E} = -\i \omega \left(\bd{I} + \frac{\nabla\nabla}{k^2}\right)\bd{G}_{A},\quad \bd{G}_{H} = \frac{1}{\mu} \nabla \times \bd{G}_{A}.
\end{equation}
The potential tensor satisfies the Helmholtz equation
\begin{equation}\label{eq-GA-eqn}
\nabla^2 \bd{G}_A + k^2 \bd{G}_A = \frac{1}{\i \omega} \delta(\bd{r}-\bd{r}')\bd{I}
\end{equation}
where $\bd{I}$ is the $3 \times 3$ identity matrix.
In addition, the dyadic Green's functions must satisfy the Sommerfeld radiation condition
\begin{equation}\label{eq-radiation-condition-GEGH-freespace}
\lim_{r \to \infty}r\left(\frac{\partial}{\partial r} - \i k\right)\bd{G}(\bd{r};\bd{r}')=\bd{0}
\end{equation}
for $\bd{G} = \bd{G}_E$ and $\bd{G} = \bd{G}_H$, here $r = |\bd{r}|$.

For the same reason, the tensor potential is not unique.
A commonly used solution to \cref{eq-GA-eqn} is given by
\begin{equation}\label{eq-Gaf-spatial}
\bd{G}_A^{\mathrm{f}}(\bd{r};\bd{r}') = -\frac{1}{\i\omega}\frac{e^{\i k |\bd{r}-\bd{r}'|}}{4\pi|\bd{r}-\bd{r}'|}\bd{I} = -\frac{1}{\i \omega} g^{\mathrm{f}}(\bd{r};\bd{r}')\bd{I},
\end{equation}
where $g^{\mathrm{f}}(\bd{r};\bd{r}')$ is the free space Green's function of the Helmholtz equation.
$\bd{G}_A^{\mathrm{f}}$ also satisfies the Sommerfeld radiation condition.

\subsection{The dyadic Green's functions of the Maxwell's equations in layered media}\label{sect-Green-layer}
Now suppose the space is horizontally stratified as layers $0,\cdots,L$ arranged from top to bottom, separated by planes $z=d_0, \cdots, z=d_{L-1}$ where $d_0 > \cdots > d_{L-1}$, and each layer is homogeneous with constant permittivity $\varepsilon_j$ and constant  permeability $\mu_j$, $j=0,\cdots,L$, respectively.
We append the layer index to the end of the subscript of any layer-dependent variable or function to represent its value specified in that layer, e.g. the wave number is then
\begin{equation}
k_j = \sqrt{\omega^2 \varepsilon_j \mu_j},\quad j=0,\cdots,L.
\end{equation}
in layer $j$.
For simplicity, the layer index is sometimes omitted, and one may assume the variable is a piecewise function of $z$.
These subscript rules are applied to the rest of this paper, including the later section about the elastic wave equations.

\subsubsection{The equations in the spatial domain}
The time-harmonic Maxwell's equations in the interior of each layer has the same form as in \cref{eq-Maxwell-free-space}, while the following interface conditions must be satisfied \cite{cai2013em}: between the adjacent layers,
\begin{equation}
\lbr \bd{n}\times \vec{E} \rbr =\vec{0},\quad \lbr \bd{n}\cdot \vec{D} \rbr =0, \quad \lbr \bd{n}\times \vec{H} \rbr =\vec{0},\quad \lbr \bd{n}\cdot \vec{B} \rbr =0.
\end{equation}
$\lbr \cdot \rbr$ represents the jump of the value at the interface, i.e. across the interface $z=d$,
\begin{equation}
\lbr f \rbr = \lim_{z\to d^+}f - \lim_{z\to d^-}f.
\end{equation}

The dyadic Green's functions are again given using the tensor potential $\bd{G}_A$ as
\begin{equation}\label{eq-GE-GH-repre}
\bd{G}_{E} = -\i \omega \left(\bd{I} + \frac{\nabla\nabla}{k^2}\right)\bd{G}_{A},\quad \bd{G}_{H} = \frac{1}{\mu} \nabla \times \bd{G}_{A}.
\end{equation}
The tensor potential $\bd{G}_A$ satisfies the Helmholtz equation
\begin{equation}\label{eq-GA-eqn-layer}
\nabla^2 \bd{G}_A + k^2 \bd{G}_A = \frac{1}{\i \omega} \delta(\bd{r}-\bd{r}')\bd{I},
\end{equation}
while the interface conditions are
\begin{equation}\label{eq-if-cond}
\lbr \bd{n}\times \bd{G}_E \rbr =\bd{0},\quad \lbr \varepsilon\bd{n}\cdot \bd{G}_E \rbr =\vec{0}^T, \quad \lbr \bd{n}\times \bd{G}_H \rbr =\bd{0},\quad \lbr \mu\bd{n}\cdot \bd{G}_H \rbr =\vec{0}^T.
\end{equation}
In horizontally layered media $\bd{n} = \bd{e}_3 = \begin{bmatrix}
0 & 0 & 1
\end{bmatrix}^T$.
In addition, the Green's functions must satisfy the Sommerfeld radiation conditions \cref{eq-radiation-condition-GEGH-freespace}.

\subsubsection{The equations in the frequency domain}
Take the the 2-D Fourier transform \cref{eq-2DFT} from $(x-x',y-y')$ to $(k_x, k_y)$.
Suppose $\bd{r} = (x,y,z)$ locates in layer $t$, and $\bd{r}'=(x',y',z')$ locates in layer $j$.
The layer index is default to the target layer $t$ when not specified.
Notations of the polar coordinate pair $(\kr, \alpha)$ are kept.

We begin with the separation of the $z$ variable from the tensor potential $\widehat{\bd{G}}_A$, which will lead to the reaction field decomposition.

For the gradient alternative in the frequency domain, define the notation $\widehat\nabla$ by
\begin{equation}
\widehat\nabla = \begin{bmatrix}
\i k_x \\ \i k_y \\ \partial_z
\end{bmatrix}
\end{equation}
when a function of $z$ follows it.
The $\widehat\nabla\widehat\nabla$, $\widehat\nabla^2$ now refer to $\widehat\nabla\widehat\nabla^T$ and $\widehat\nabla^T\widehat\nabla$, respectively.

Recall that the right-hand side of the Helmholtz equation \cref{eq-GA-eqn-layer} is nontrivial if and only if $\bd{r}'$ is in the same layer as $\bd{r}$, i.e. $j=t$, define
\begin{equation}\label{eq-GAf}
\widehat{\bd{G}}_A^\mathrm{r}(\bd{r};\bd{r}') = \widehat{\bd{G}}_A(\bd{r};\bd{r}') - \delta_{j,t}\widehat{\bd{G}}_A^\mathrm{f}(\bd{r};\bd{r}'),
\end{equation}
where $\delta_{j,t}$ is the Kronecker delta function.
The complementary part $\widehat{\bd{G}}_A^\mathrm{r}$ is called the reaction field, and satisfies the \emph{homogeneous} Helmholtz equation
\begin{equation}\label{eq-Helmholtz-freq}
\widehat\nabla^2 \widehat{\bd{G}}_A^\mathrm{r} + k^2 \widehat{\bd{G}}_A^\mathrm{r} = \bd{0},\quad\text{i.e.}\quad \partial_{zz} \widehat{\bd{G}}_A^\mathrm{r} + (k^2 - \kr^2)\widehat{\bd{G}}_A^\mathrm{r} = \bd{0}.
\end{equation}
Define
\begin{equation}
k_z = \sqrt{k^2 - \kr^2}
\end{equation}
where the square root takes nonnegative real part.
The general solutions to \cref{eq-Helmholtz-freq}, when treated as an ordinary differential equation of $z$, is given by
\begin{equation}\label{eq-GAr}
\widehat{\bd{G}}_A^\mathrm{r} = e^{\i k_z z} \widehat{\bd{G}}_A^{\mathrm{r}\uparrow} + e^{-\i k_z z} \widehat{\bd{G}}_A^{\mathrm{r}\downarrow}
\end{equation}
where $\widehat{\bd{G}}_A^{\uparrow}$ and $\widehat{\bd{G}}_A^{\downarrow}$ are piecewise constants with respect to $z$, namely,
\begin{equation}
\widehat{\bd{G}}_A^{\mathrm{r}\uparrow} = \widehat{\bd{G}}_{A,t}^{\mathrm{r}\uparrow}, \quad \widehat{\bd{G}}_A^{\mathrm{r}\downarrow} = \widehat{\bd{G}}_{A,t}^{\mathrm{r}\downarrow}
\end{equation}
when in layer $t$.
Indeed, we can also write $\widehat{\bd{G}}_A^\mathrm{f}$ in the frequency domain in a similar form
\begin{align}\label{eq-GA-free-freq}
\widehat{\bd{G}}_A^\mathrm{f} = \frac{-1}{2 \omega k_{z,j}} e^{\i k_{z,j} |z-z'|}\bd{I} = \frac{-1}{2 \omega k_{z,j}} e^{\i k_{z,j} (z-z')} 1_{\{z>z'\}}\bd{I} + \frac{-1}{2 \omega k_{z,j}} e^{\i k_{z,j} (z'-z)} 1_{\{z<z'\}}\bd{I}
\end{align}
when $z' \ne z$.
Hence we may alternatively use the notations
\begin{align}\label{eq-GApm}
\widehat{\bd{G}}_A = e^{\i k_{z} z} \widehat{\bd{G}}_A^{\uparrow} + e^{-\i k_{z} z}\widehat{\bd{G}}_A^{\downarrow}
\end{align}
where
\begin{align}\label{eq-GA-fr}
\begin{split}
\widehat{\bd{G}}_A^{\uparrow} = \delta_{j,t} 1_{\{z>z'\}} \frac{-e^{-\i k_{z,j} z'}}{2\omega k_{z,j}}\bd{I} + \widehat{\bd{G}}_{A}^{\mathrm{r}\uparrow}, \quad
\widehat{\bd{G}}_A^{\downarrow} = \delta_{j,t} 1_{\{z<z'\}} \frac{-e^{\i k_{z,j} z'}}{2\omega k_{z,j}}\bd{I} + \widehat{\bd{G}}_{A}^{\mathrm{r}\downarrow}
\end{split}
\end{align}
assuming $z \ne d_i$, $0\le i \le L-1$ and $z \ne z'$.

We call the separation of variable $z$ in \cref{eq-GApm} the \emph{reaction field decomposition} of $\widehat{\bd{G}}_A$, since this decomposition separates the free-space part and the reaction field part, and distinguishes the wave components by propagating upwards or downwards in the vertical direction.
$e^{\tau^\ast \i k_z z}\widehat{\bd{G}}_A^{\ast}$ are called the \emph{propagation components} of $\widehat{\bd{G}}_A$ by name, where
\begin{equation}\label{def-tau}
\tau^\uparrow = +1, \quad \tau^\downarrow = -1, \quad \ast \in \{ \uparrow, \downarrow \}.
\end{equation}
The $\tau^\ast$ notations are used for the rest of this paper.
$e^{\tau^\ast \i k_z z}\widehat{\bd{G}}_A^{\mathrm{r}\ast}$ are called the \emph{reaction components} of $\widehat{\bd{G}}_A$.
Similarly for $\widehat{\bd{G}}_E$ and $\widehat{\bd{G}}_H$ we will name the corresponding terms, later following \cref{eq-GE-GH-GA-form}.
\begin{remark}
In the previous work of the Helmholtz equations \cite{bo2019hfmm,bo2020tefmm,zhang2020}, the separation of variable $z'$ was also executed, so that each reaction component was further divided by the ``propagating direction'' of $z'$.
\end{remark}

Then, we re-format the restricting equations of the tensor potential $\widehat{\bd{G}}_A$, including the interface conditions and the radiation conditions, using the matrix basis $\J{1}, \cdots, \J{9}$.

With the $z$ variable separated in the reaction field decomposition, we can further expand the $\partial_z$ operator in $\widehat{\nabla}$.
Define
\begin{equation}
\widehat\nabla^{\pm} = \begin{bmatrix}
\i k_x & \i k_y & \pm \i k_{z}
\end{bmatrix}^T.
\end{equation}
By expanding from \cref{eq-GEGH-GA}, the Green's functions $\widehat{\bd{G}}_E$ and $\widehat{\bd{G}}_H$ can be represented as the linear combinations of reaction components of $\widehat{\bd{G}}_A$, with every coefficient matrices in $\mathfrak{R}^0$:
\begin{align}\label{eq-GE-GH-GA-form}
\begin{split}
\widehat{\bd{G}}_E =& -\i\omega \left(\bd{I}+\frac{\widehat\nabla^{+} (\widehat\nabla^+)^T}{k^2}\right)e^{\i k_{z} z}\widehat{\bd{G}}_A^\uparrow -\i\omega \left(\bd{I}+\frac{\widehat\nabla^{-} (\widehat\nabla^-)^T}{k^2}\right) e^{-\i k_{z} z} \widehat{\bd{G}}_A^\downarrow \\
=& -\i\omega\left( \J{1} + \frac{\kr^2}{k^2}\J{2}+\frac{1}{k^2}\J{5} + \frac{\i k_{z}}{k^2}\J{3} + \frac{\i k_{z}}{k^2} \J{4} \right) e^{\i k_{z} z}\widehat{\bd{G}}_A^\uparrow \\
&-\i\omega\left( \J{1} + \frac{\kr^2}{k^2}\J{2}+\frac{1}{k^2}\J{5} - \frac{\i k_{z}}{k^2}\J{3} - \frac{\i k_{z}}{k^2} \J{4} \right) e^{-\i k_{z} z}\widehat{\bd{G}}_A^\downarrow,
\end{split} \\
\begin{split}
\widehat{\bd{G}}_H =& \frac{1}{\mu}\widehat\nabla^+ \times e^{\i k_{z} z} \widehat{\bd{G}}_A^\uparrow + \frac{1}{\mu}\widehat\nabla^- \times e^{-\i k_{z} z} \widehat{\bd{G}}_A^\downarrow \\
=& \frac{1}{\mu}\left(\J{6}+\J{7} - \i k_{z} \J{9}\right) e^{\i k_{z} z} \widehat{\bd{G}}_A^\uparrow + \frac{1}{\mu}\left(\J{6}+\J{7} + \i k_{z} \J{9}\right) e^{-\i k_{z} z} \widehat{\bd{G}}_A^\downarrow.
\end{split}
\end{align}

The $\bd{n} \cdot$ and $\bd{n} \times$ operators in interface conditions \cref{eq-if-cond}, given $\bd{n} = \bd{e}_3$, are then converted to their equivalent matrix forms in the frequency domain, respectively, as
\begin{align}\label{eq-if-cond-J}
\lbr \J{1} \widehat{\bd{G}}_E \rbr = \bd{0}, \quad \lbr \varepsilon\J{2} \widehat{\bd{G}}_E \rbr = \bd{0}, \quad \lbr \J{9} \widehat{\bd{G}}_H \rbr = \bd{0}, \quad \lbr \mu \J{7} \widehat{\bd{G}}_H \rbr = \bd{0}.
\end{align}
For instance, the first equation in \cref{eq-if-cond} is equivalent to the continuity of the first two rows of $\widehat{\bd{G}}_E$ across the interface, corresponding to the first equation of the above.
In details, in each pair of brackets,
\begin{align}
\begin{split}
\J{1} \widehat{\bd{G}}_E &= -\i\omega\left( \J{1} +\frac{1}{k^2}\J{5} + \frac{\i k_z}{k^2}\J{3} \right)e^{\i k_z z}\widehat{\bd{G}}_A^\uparrow -\i\omega\left( \J{1} +\frac{1}{k^2}\J{5} - \frac{\i k_z}{k^2}\J{3} \right)e^{-\i k_z z}\widehat{\bd{G}}_A^\downarrow, \\
\varepsilon\J{2} \widehat{\bd{G}}_E &= -\i\omega\varepsilon\left(\frac{\kr^2}{k^2}\J{2} + \frac{\i k_z}{k^2} \J{4} \right) e^{\i k_z z} \widehat{\bd{G}}_A^\uparrow -\i\omega\varepsilon\left(\frac{\kr^2}{k^2}\J{2} - \frac{\i k_z}{k^2} \J{4} \right) e^{-\i k_z z} \widehat{\bd{G}}_A^\downarrow, \\
\J{9}\widehat{\bd{G}}_H &= -\frac{1}{\mu} \left( \J{3} - \i k_z \J{1} \right) e^{\i k_z z} \widehat{\bd{G}}_A^\uparrow -\frac{1}{\mu} \left( \J{3} + \i k_z \J{1} \right) e^{-\i k_z z} \widehat{\bd{G}}_A^\downarrow \\
\mu\J{7} \widehat{\bd{G}}_H &= \left(\kr^2\J{1}+\J{5}\right)e^{\i k_z z}\widehat{\bd{G}}_A^\uparrow + \left(\kr^2\J{1}+\J{5}\right)e^{-\i k_z z}\widehat{\bd{G}}_A^\downarrow.
\end{split}
\end{align}
The particular choice of $\J{9}$ and $\J{7}$ in \cref{eq-if-cond-J} ensures that we can simply use $\J{1}, \cdots, \J{5}$ as factors in the above equations.
When expanded across any interface $z = d_l$, each of them is a linear equation of $\widehat{\bd{G}}_{A,l}^{\ast}$ and $\widehat{\bd{G}}_{A,l+1}^{\ast}$ with coefficients in $\mathfrak{R}^0$.
Take the last one as an example.
Due to \cref{eq-GA-fr} from the reaction field decomposition, the equation $\lbr \mu \J{7} \widehat{\bd{G}}_H \rbr = \bd{0}$ is expanded with the $z \to d_l^{+}$ side as
\begin{align}
\begin{split}
&{}\left(\kr^2\J{1}+\J{5}\right)e^{\i k_{z,l} z}\widehat{\bd{G}}_{A,l}^\uparrow + \left(\kr^2\J{1}+\J{5}\right)e^{-\i k_{z,l} z}\widehat{\bd{G}}_{A,l}^\downarrow \\
={}&{}(\kr^2 \J{1} + \J{5}) \left(\widehat{\bd{G}}_{A,l}^{\mathrm{r}\uparrow} + \delta_{j,l} 1_{\{d_l > z'\}} \frac{-e^{-\i k_{z,j} z'}}{2\omega k_{z,j}}(\J{1} + \J{2}) \right)e^{\i k_{z,l}z} \\
{}+{}&{}(\kr^2 \J{1} + \J{5}) \left(\widehat{\bd{G}}_{A,l}^{\mathrm{r}\downarrow} + \delta_{j,l} 1_{\{d_l < z'\}} \frac{-e^{\i k_{z,j} z'}}{2\omega k_{z,j}}(\J{1} + \J{2}) \right)e^{-\i k_{z,l}z}
\end{split}
\end{align}
and will be able to be written using elements of $\mathfrak{R}^0$ as coefficients.
The same result applies to the $z \to d_l^{-}$ side in layer $l+1$.
So is the continuity equation itself at $z=d_l$.

For the Sommerfeld radiation conditions, it is sufficient to describe them in the frequency domain as the decay conditions of $\widehat{\bd{G}}_E$ and $\widehat{\bd{G}}_H$ as $z \to \pm \infty$, so that waves never \emph{come} from $z = \pm \infty$.
Such conditions are sufficient to uniquely determine the Green's function, so we don't bother to raise more complicated statements.
In the top layer, by \cref{eq-GE-GH-GA-form}, the downwards propagation components of $\widehat{\bd{G}}_E$ and $\widehat{\bd{G}}_H$ must be zero, since its asymptotic behavior is determined by the $e^{-\i k_{z,0} z}$ factor, so
\begin{align}\label{eq-radiation-GE}
-\i\omega\left( \J{1} + \frac{\kr^2}{k_0^2}\J{2}+\frac{1}{k_0^2}\J{5} - \frac{\i k_{z,0}}{k_0^2}\J{3} - \frac{\i k_{z,0}}{k_0^2} \J{4}\right)\widehat{\bd{G}}_{A,0}^\downarrow &= \bd{0}, \\ \label{eq-radiation-GH}
\frac{1}{\mu_0}\left(\J{6}+\J{7} + \i k_{z,0} \J{9}\right) \widehat{\bd{G}}_{A,0}^\downarrow &= \bd{0}
\end{align}
where $\widehat{\bd{G}}_{A,0}^{\downarrow} = \widehat{\bd{G}}_{A,0}^{\mathrm{r}\downarrow}$ ever since $z > z'$.
When equations \cref{eq-radiation-GE} and \cref{eq-radiation-GH} are treated as linear equations of $\widehat{\bd{G}}_{A,0}^{\mathrm{r}\downarrow}$, both coefficient matrices have rank $2$, and lead to the same general solution
\begin{equation}
\widehat{\bd{G}}_{A,0}^{\mathrm{r}\downarrow} = \widehat{\nabla}_0^- \cdot \bd{v}^T
\end{equation}
for arbitrary $3 \times 1$ vector $\bd{v}$.
Hence we can discard \cref{eq-radiation-GH} and keep only \cref{eq-radiation-GE} where an element of $\mathfrak{R}^0$ is multiplied by $\widehat{\bd{G}}_{A,0}^{\downarrow}$.
Similarly, as $z \to -\infty$ we get another equation in the bottom layer
\begin{align}\label{eq-radiation-GE-2}
-\i\omega\left( \J{1} + \frac{\kr^2}{k_L^2}\J{2}+\frac{1}{k_L^2}\J{5} + \frac{\i k_{z,L}}{k_L^2}\J{3} + \frac{\i k_{z,L}}{k_L^2} \J{4}\right)\widehat{\bd{G}}_{A,L}^\uparrow = \bd{0}
\end{align}
where $\widehat{\bd{G}}_{A,L}^{\uparrow} = \widehat{\bd{G}}_{A,L}^{\mathrm{r}\uparrow}$ ever since $z < z'$.

The interface conditions \cref{eq-if-cond-J} and the radiation conditions \cref{eq-radiation-GE} and \cref{eq-radiation-GE-2} in total consist a linear system of the unknown tensors $\widehat{\bd{G}}_{A,t}^{\mathrm{r}\ast}$ in the reaction field for $0 \le t \le L$ and $\ast \in \{ \uparrow, \downarrow \}$, with coefficients in $\mathfrak{R}^0$.
They are in general sufficient to uniquely determine all the $\widehat{\bd{G}}_{A,t}^{\mathrm{r}\ast}$ terms.
By \Cref{thm-solution-filtering}, there exists a solution to this linear system with each unit of the block in $\mathfrak{R}^0$, i.e. satisfying each $\widehat{\bd{G}}_{A}^{\mathrm{r}\uparrow}, \widehat{\bd{G}}_{A}^{\mathrm{r}\downarrow} \in \mathfrak{R}^0$ piecewisely in each layer.
It also follows from the reaction field decomposition \cref{eq-GApm} that
\begin{equation}
\widehat{\bd{G}}_A \in \mathfrak{R}^0
\end{equation}
has the matrix basis formulation.

\subsubsection{Further simplification of the formulation}
With the matrix basis formulation we are able to further simplify the interface equations \cref{eq-if-cond-J} and the radiation equations \cref{eq-radiation-GE} and \cref{eq-radiation-GE-2}.
Suppose $\bd{G}_{A}^{\mathrm{r}\ast} \in \mathfrak{R}^0$ has the basis expansion
\begin{equation}\label{eq-GA-matrix-basis}
\bd{G}_{A}^{\mathrm{r}\ast} = \sum_{l=1}^{5} a_{l}^{\mathrm{r}\ast} \J{l}, \quad \ast \in \{ \uparrow, \downarrow\}.
\end{equation}
Corresponding to \cref{eq-GApm}, define
\begin{equation}\label{eq-a-decomp}
a_l = \delta_{j,t} a_l^{\mathrm{f}} + e^{\i k_{z}} a_{l}^{\mathrm{r}\uparrow} + e^{-\i k_{z}} a_{l}^{\mathrm{r}\downarrow},
\end{equation}
where $a_{l}^\mathrm{f}$ are the matrix basis coefficients of the free space potential tensor \cref{eq-GA-free-freq}
\begin{equation}
\sum_{l=1}^{5} a_{l}^{\mathrm{f}} \J{l} = \widehat{\bd{G}}_A^{\mathrm{f}} = \frac{-1}{2\omega k_{z,j}} e^{\i k_{z,j}|z-z'|} (\J{1} + \J{2}),
\end{equation}
then we have derived the matrix basis formulation for $\widehat{\bd{G}}_A = \sum_{l=1}^{5} a_{l} \J{l}$ with the reaction field decomposition of each $a_l$.
It is straightforward that each coefficient $a_{l}$ satisfies the Helmholtz equation
\begin{equation}\label{eq-a-Helmholtz}
\partial_{zz} a_{l} + k_{z}^2 a_{l} = 0
\end{equation}
piecewisely in each layer, provided $z \ne z'$.

However, the potential tensor $\widehat{\bd{G}}_A$ is still \emph{not} uniquely determined if assumed only having a matrix basis representation.
For instance, for any functions $f_1, f_2 \in C^2(\mathbb{R}^3)$ satisfying the Helmholtz equation $\nabla^2 f_j + k^2 f_j = 0$, $j=1,2$, the potential tensor $\widehat{\bd{G}}_A + \partial_z \widehat{f}_1 \J{2} + \widehat{f}_1 \J{3} + \partial_z \widehat{f}_2 \J{4} + \widehat{f}_2 \J{5}$ can be used to replace $\widehat{\bd{G}}_A$.
To eliminate the degrees of freedom in the coefficients, define the functions $b_1$, $b_2$ and $b_3$ by linear transforms of $a_{l}$:
\begin{align}\label{eq-b}
\begin{split}
b_1 &= a_1, \\
b_2 &= \frac{1}{\mu}\left( a_2 - \partial_z a_3 \right), \\
b_3 &= \frac{1}{\mu}\left( \partial_z a_1 + \kr^2 a_4 - \kr^2 \partial_z a_5 \right),
\end{split}
\end{align}
so that $\widehat{\bd{G}}_E$ and $\widehat{\bd{G}}_H$ in \cref{eq-GE-GH-repre} can be represented by
\begin{align}\label{eq-GE-GH-a}
\begin{split}
\widehat{\bd{G}}_E &= -\frac{\i \omega}{k^2} \left( k^2 b_1 \J{1} + \mu \kr^2 b_2 \J{2} + \mu \partial_z b_2 \J{3} + \mu b_3 \J{4} + \left(\frac{k^2}{\kr^2}b_1 + \frac{\mu}{\kr^2} \partial_z b_3\right)\J{5} \right), \\
\widehat{\bd{G}}_H &= \frac{1}{\mu}\left( b_1 \J{6} + \mu b_2 \J{7} + \left(\frac{1}{\kr^2}\partial_z b_1 - \frac{\mu}{\kr^2} b_3\right)\J{8} - \partial_z b_1 \J{9} \right).
\end{split}
\end{align}
Each $b_{l}$ has the reaction component decomposition corresponding to \cref{eq-a-decomp}
\begin{equation}\label{eq-b-decomp}
b_{l}=b_{l}(\kr, z, z') = \delta_{j,t} b_{l}^{\mathrm{f}}(\kr, z, z') + e^{\i k_{z}z} b_{l}^{\mathrm{r}\uparrow}(\kr, z') + e^{-\i k_{z}z} b_{l}^{\mathrm{r}\downarrow}(\kr, z'),
\end{equation}
where
\begin{align}
\begin{split}
b_1^\mathrm{f} &= a_1^\mathrm{f}, \\
b_2^{\mathrm{f}} &= \frac{1}{\mu}\left( a_2^\mathrm{f} - \partial_z a_3^\mathrm{f} \right), \\
b_3^{\mathrm{f}} &= \frac{1}{\mu}\left( \partial_z a_1^{\mathrm{f}} + \kr^2 a_4^{\mathrm{f}} - \kr^2 \partial_z a_5^{\mathrm{f}} \right), \\
b_1^{\mathrm{r}\ast} &= a_1^{\mathrm{r}\ast}, \\
b_2^{\mathrm{r}\ast} &= \frac{1}{\mu}\left( a_2^{\mathrm{r}\ast} - \tau^\ast \i k_{z} a_3^{\mathrm{r}\ast} \right), \\
b_3^{\mathrm{r}\ast} &= \frac{1}{\mu}\left( \tau^\ast \i k_{z} a_1^{\mathrm{r}\ast} + \kr^2 a_4^{\mathrm{r}\ast} - \kr^2 \tau^\ast \i k_{z} a_5^{\mathrm{r}\ast} \right),
\end{split}
\end{align}
due to \cref{eq-a-decomp}, here $\ast \in \{ \uparrow, \downarrow\}$, $\tau^\uparrow = 1$, $\tau^\downarrow = -1$.
Specifically, since we have chosen $\bd{G}_A = \bd{G}_A^{\mathrm{f}} = -g^{\mathrm{f}}/(\i \omega)\bd{I}$, it's clear that
\begin{equation}
a_1^\mathrm{f} = a_2^\mathrm{f} = -\frac{1}{\i\omega} \widehat g^\mathrm{f}, \quad a_3^\mathrm{f} = a_4^\mathrm{f} = a_5^\mathrm{f} = 0,
\end{equation}
where $ \widehat g^\mathrm{f} = \i e^{\i k_{z,j} |z-z'|}/(2 k_{z,j})$.
Therefore
\begin{align}
\begin{split}
b_1^\mathrm{f} &= -\frac{1}{\i \omega} \widehat{g}^\mathrm{f}, \\
b_2^\mathrm{f} &= -\frac{1}{\i \omega} \frac{1}{\mu_j} \widehat{g}^\mathrm{f}, \\
b_3^\mathrm{f} &= -\frac{1}{\i \omega} \frac{1}{\mu_j} \partial_z \widehat{g}^\mathrm{f}
= - \partial_{z'} b_2^\mathrm{f}.
\end{split}
\end{align}
Each $b_l$ also satisfies the Helmholtz equation
\begin{equation}\label{eq-b-Helmholtz}
\partial_{zz} b_j + k_z^2 b_j = 0
\end{equation}
piecewisely in each layer provided $z' \ne z$.

For solving $b_1, b_2$ and $b_3$ we take a review of the interface equations and the radiation equations.
One can easily verify the interface equations \cref{eq-if-cond}, which were reinterpreted in the frequency domain as in \cref{eq-if-cond-J}, are equivalent to the following by comparing the matrix basis coefficients.
For example, from
\begin{equation*}
\bd{J}_1 \cdot \widehat{\bd{G}}_E = -\i\omega \left( b_1 \J{1} + \omega^{-2}\varepsilon^{-1} \partial_z b_2 \J{3} + ( \kr^{-2} b_1 + \omega^{-2}\varepsilon^{-1} \kr^{-2} \partial_z b_3 ) \J{5} \right)
\end{equation*}
the continuity equations
\begin{equation*}
\lbr -\i\omega b_1 \rbr = 0, \quad \lbr -\i \omega^{-1} \varepsilon^{-1} \partial_z b_2 \rbr = 0, \quad \lbr -\i \omega \kr^{-2} b_1  - \i \omega^{-1} \kr^{-2} \varepsilon^{-1} \rbr = 0
\end{equation*}
are revealed.
A complete list by items is given below:
\begin{align}\label{eq-if-equiv-b}
\begin{split}
\lbr \bd{n}\times \widehat{\bd{G}}_E \rbr =\bd{0} \Leftrightarrow \lbr\J{1}\cdot \widehat{\bd{G}}_E\rbr = \bd{0}
&\Leftrightarrow
\lbr b_1 \rbr =0, \left\lbr \frac{1}{\varepsilon}\partial_z b_2 \right\rbr = 0, \left\lbr\frac{1}{\varepsilon}\partial_z b_3\right\rbr = 0 ; \\
\lbr\varepsilon\bd{n}\cdot \widehat{\bd{G}}_E \rbr =\vec{0} \Leftrightarrow \lbr\J{2} \cdot \varepsilon \widehat{\bd{G}}_E\rbr = \bd{0}
&\Leftrightarrow \lbr b_2 \rbr =0, \lbr b_3 \rbr = 0 ; \\
\lbr \bd{n}\times \widehat{\bd{G}}_H \rbr =\bd{0} \Leftrightarrow \lbr \J{9}\cdot \widehat{\bd{G}}_H \rbr = \bd{0}
&\Leftrightarrow \lbr b_2 \rbr =0, \lbr b_3 \rbr = 0, \left\lbr\frac{1}{\mu}\partial_z b_1\right\rbr=0 ; \\
\lbr\mu\bd{n}\cdot \widehat{\bd{G}}_H \rbr =\vec{0} \Leftrightarrow \lbr \J{7}\cdot \mu\widehat{\bd{G}}_H \rbr =\bd{0}
&\Leftrightarrow
\lbr b_1 \rbr=0.
\end{split}
\end{align}
The radiation equations \cref{eq-radiation-GE} and \cref{eq-radiation-GE-2} are reduced to
\begin{equation}
b_{l,0}^{\mathrm{r}\downarrow} = 0, \quad b_{l,L}^{\mathrm{r}\uparrow} = 0
\end{equation}
in the top and the bottom layer, respectively, i.e. waves coming from $z = \pm \infty$ are prohibited in the reaction field decomposition.
For each $l$, the above are a total of $2L+2$ linear equations of $b_{l}^{\mathrm{r}\uparrow}$ and $b_{l}^{\mathrm{r}\downarrow}$ from $L+1$ layers.
These linear equations are solvable, from the knowledge of the acoustic wave equation in layered media:
\begin{itemize}
\item $-\i\omega b_1$ is exactly the reflection/transmission coefficient in the frequency domain of the Green's function of the Helmholtz equation in layered media, with piecewise constant material parameters $1 / \varepsilon$.
Thus we can solve $b_1$ in the frequency domain like solving the known scalar layered Helmholtz problem \cite{bo2020tefmm}.
\item Similarly, $-\i\omega \mu_{j} b_2$ is exactly the one with piecewise constant parameters $1/\mu$.
\item The linear system regarding $b_3^{\mathrm{r}\ast}$ has exactly the same coefficients as $b_2^{\mathrm{r} \ast}$ for the unknowns, so it's solvable since $b_2^{\mathrm{r}\ast}$ are uniquely determined by the physical problem.
Moreover,
\begin{equation}
-\partial_{z'} b_2 = -\partial_{z'} \delta_{j,t} b_2^{\mathrm{f}} - e^{\i k_{z} z} \partial_{z'} b_2^{\mathrm{f} \uparrow} - e^{-\i k_{z} z} \partial_{z'} b_2^{\mathrm{f} \downarrow}
\end{equation}
satisfies every equation that $b_3$ should satisfy, so by uniqueness,
\begin{equation}
b_3 = -\partial_{z'} b_2, \text{ i.e. } b_3^{\mathrm{r}\ast} = -\partial_{z'} b_2^{\mathrm{r}\ast}.
\end{equation}
\end{itemize}
The $b_1$ and $b_2$ functions are corresponding to the TE mode component and the TM mode component in the $E_z$-$H_z$ formulation \cite{kong1972, chew2006matrix}, respectively.

\begin{remark}
To characterize $\widehat{\bd{G}}_E$ and $\widehat{\bd{G}}_H$ we don't need the intermediate, undetermined tensor potential $\widehat{\bd{G}}_A$ anymore.
This paper derived the above formulation via $\widehat{\bd{G}}_A$ because some previous work did need its formulation, such as in the integral equation applications in \cite{michalski1990}.
\end{remark}

\begin{remark}
If the interface conditions are not exactly proposed like the above, e.g. proposed on the boundary of a half-space problem where only two of the interface equations \cref{eq-if-cond} hold, the result of the matrix basis formulation still holds with the same derivation.
\end{remark}

\begin{remark}[modes of the system]
A mode of the layered media is an eigenstate without stimulation from any given source, i.e. the nontrivial solution of $\widehat{\bd{G}}_A^{\mathrm{r}\ast}$ satisfying the above interface equations and radiation equations for certain values of $\kr$, with each $\widehat{\bd{G}}_A^{\mathrm{f}}$ replaced by $0$.
It is corresponding to a pole in the frequency domain \cite{zhang2020}.
In such situation we can still derive the simplified formulation using terms $b_1, b_2$ and $b_3$, but $b_3$ plays an independent role and is not anymore tied with $b_2$.
\end{remark}

\subsubsection{The transverse potential and the Sommerfeld potential}
Here we take a quick review on the transverse potential and the Sommerfeld potential formulations and show how to reach them from the matrix basis formulation.
Both formulations restrict certain 5 entries of the $3 \times 3 $ tensor $\widehat{\bd{G}}_A$ to be nonzero, which uniquely determines the tensor potential.
Here we claim the potential tensors in these formulations have the matrix basis representation, and can be derived using $b_1$ and $b_2$.
Due to the uniqueness of $b_1$ and $b_2$, it suffices to explicitly construct them.

The transverse potential takes the form
\begin{equation}
\widehat{\bd{G}}_A^{\mathrm{t}} = \begin{bmatrix}
\times & \times & \\
\times & \times & \\
& & \times
\end{bmatrix},
\end{equation}
where each $\times$ marks a nonzero entry.
We claim $\widehat{\bd{G}}_A^{\mathrm{t}} = a_1 \J{1} + a_2 \J{2} + a_5 \J{5}$.
By \cref{eq-b},
\begin{equation}
b_1 = a_1, \quad b_2 = \frac{1}{\mu} a_2, \quad b_3 = -\partial_{z'} b_2 = \frac{1}{\mu}(\partial_z a_1 - \kr^2 \partial_z a_5).
\end{equation}
Since $a_l$, $b_l$ satisfy the Helmholtz equation \cref{eq-a-Helmholtz} and \cref{eq-b-Helmholtz}, respectively, we have
\begin{equation}
a_1 = b_1, \quad a_2 = \mu b_2, \quad a_5 = b_1 - \frac{\mu \partial_{z} \partial_{z'} b_2}{\kr^2 k_{z}^2}.
\end{equation}

The Sommerfeld potential takes the form
\begin{equation}
\bd{G}_A^{\mathrm{S}} = \begin{bmatrix}
\times & & \\
& \times & \\
\times & \times & \times
\end{bmatrix}
\end{equation}
and we claim $\bd{G}_A^{\mathrm{S}} = a_1 \J{1} + a_2 \J{2} + a_4 \J{4}$.
Again by \cref{eq-b},
\begin{equation}
b_1 = a_1, \quad b_2 = \frac{1}{\mu} a_2, \quad b_3 = -\partial_{z'} b_2 = \frac{1}{\mu}(\partial_z a_1 + \kr^2 a_4),
\end{equation}
so
\begin{equation}
a_1 = b_1, \quad a_2 = \mu b_2, \quad a_4 = -\frac{\mu \partial_{z'} b_2 + \partial_{z} b_1}{\kr^2}.
\end{equation}

\begin{remark}
In the transverse potential $\widehat{\bd{G}}_A^{\mathrm{t}}$, although the coefficient $a_5$ has a $\kr^2$ factor in the denominator, there's no singularity in the integrand of $\widehat{\bd{G}}_A^{\mathrm{t}}$ at $\kr = 0$ since they can be cancelled out with the entries of $\J{5}$.
The same does happen to the Sommerfeld potential $\widehat{\bd{G}}_A^{\mathrm{S}}$, but it's not explicitly shown in the expression of $a_4 \J{4}$.
Numerically we should take some care if the values as $\kr \to 0$ are required.
\end{remark}

\section{Application to the elastic wave equation in layered media}\label{sect-4}
In this section we will apply the matrix based formulation to the dyadic Green's function of the elastic wave equation in layered media.
The interface conditions for various contacting media phase types will be discussed.
In the end we also give a brief discussion on the case when source is inside zero-viscosity fluid layer, with a simplified vector basis formulation.
In these problems, we suppose the displacement of the media is small, so that linearization of the elastic wave equations is in general applicable.

\subsection{The dyadic Green's function of the elastic wave equation in the free space}
Suppose the homogeneous and isotropic material occupies the entire space, with density $\rho$ and Lam\'e constants $\lambda, \mu$.
Define
\begin{equation}\label{eq-gamma}
\gamma = \lambda + 2\mu.
\end{equation}
If the material is solid, both $\lambda, \mu >0$.
If the material is (compressible) liquid or gas, we assume the viscosity is negligible (also for the rest of this paper; otherwise we should treat it in the same way as a solid layer), then $\mu = 0$, and $\gamma = \lambda$.
Note that the shear modulus $\mu$ is different from the dynamic viscosity in fluid.
Suppose the time dependence is harmonic, i.e. in terms of $\exp(\i \omega t)$.

With external force $\bd{b}(\bd{r})$, the elastic wave equation in solid is a partial differential equation of the displacement $\bd{u}(\bd{r})$
\begin{equation}\label{eq-ewq-st}
-\omega^2\rho \bd{u} = \nabla \cdot \bs{\mathcal{T}} + \bd{b}
\end{equation}
assuming $|\bd{u}| \ll 1$, where the $3 \times 3$ stress tensor $\bs{\mathcal{T}}$ is defined as
\begin{equation}\label{eq-stress-tensor}
\bs{\mathcal{T}}_{ij} = \lambda \delta_{i,j} \sum_{l=1}^{3} \frac{\partial u_l}{\partial x_l} + \mu \left( \frac{\partial u_i}{\partial x_j} + \frac{\partial u_j}{\partial x_i} \right),
\end{equation}
where $(x_1, x_2, x_3)$ is used as an alternative notation for $(x,y,z)$, and $\bd{u} = (u_1, u_2, u_3)$ \cite{chew2008}.
The equivalent form not using the stress tensor $\bs{\mathcal{T}}$ is
\begin{equation}\label{eq-ewq}
(\lambda+\mu)\nabla \nabla \cdot \bd{u} + \mu \nabla^2 \bd{u} + \omega^2 \rho \bd{u} = -\bd{b}.
\end{equation}
Given source location $\bd{r}'$, the dyadic Green's function $\bd{G}(\bd{r};\bd{r}')$ is a $3 \times 3$ tensor satisfying the equation
\begin{equation}\label{eq-ewq-G}
(\lambda+\mu)\nabla \nabla \cdot \bd{G} + \mu \nabla^2 \bd{G} + \omega^2 \rho \bd{G} = -\delta(\bd{r} - \bd{r}')\bd{I}.
\end{equation}
The solution is known as
\begin{equation}\label{eq-ewq-Gf}
\bd{G} = \bd{G}^{\mathrm{f}} (\bd{r};\bd{r}') = \frac{1}{\mu}\left(\bd{I} + \frac{\nabla\nabla}{k_s^2}\right)g_s(\bd{r};\bd{r}') - \frac{1}{\gamma}\frac{\nabla\nabla}{k_c^2}g_c(\bd{r};\bd{r}'),
\end{equation}
where
\begin{equation}
k_s = \sqrt{\frac{\omega^2\rho}{\mu}}, \quad k_c = \sqrt{\frac{\omega^2\rho}{\gamma}}
\end{equation}
are the wave numbers of the S-wave and the P-wave, respectively, and
\begin{equation}
g_s(\bd{r};\bd{r}') = \frac{e^{\i k_s |\bd{r}-\bd{r}'|}}{4\pi|\bd{r}-\bd{r}'|}, \quad g_c(\bd{r};\bd{r}') = \frac{e^{\i k_c |\bd{r}-\bd{r}'|}}{4\pi|\bd{r}-\bd{r}'|}
\end{equation}
are the free space Green's functions of the Helmholtz equation with wave numbers $k_s$ and $k_c$, respectively \cite{chew2008}.

In fluid, i.e. in liquid or gas, assuming the external force $\bd{b}$ is conservative (for the rest of this paper as well), there's only the P-wave propagating in the media, and the acoustic wave equation with respect to the displacement $\bd{u}$ is
\begin{align}\label{eq-euler-fluid}
\lambda \nabla \nabla \cdot \bd{u} + \omega^2 \rho \bd{u} &= -\bd{b}, \\
\nabla \times \bd{u} &= \bd{0},
\end{align}
where the first equation repeats \cref{eq-ewq}, and the second one is introduced because of zero viscosity.
Also consider the linearized Navier--Stokes equation
\begin{equation}\label{eq-ns-fluid}
-\omega^2 \rho \bd{u} = \rho \left(\frac{D\bd{v}}{Dt}\right)^{\wedge} = -\nabla p + \bd{b}
\end{equation}
where $\bd{v}$ is the velocity, $D/Dt$ is the material derivative and $\wedge$ represents the Fourier transform of time.
For the pressure $p$ we get
\begin{equation}
p = -\lambda \nabla \cdot \bd{u} + p_0
\end{equation}
where $p_0$ is a constant.
For the sake of convenience let $p_0 = 0$.
Take the divergence of the first equation of \cref{eq-euler-fluid},
\begin{equation}\label{eq-ewq-p}
\nabla^2 p + \frac{\omega^2 \rho}{\lambda} p = \frac{1}{\lambda}\nabla \cdot \bd{b}.
\end{equation}
The dyadic Green's function of the P-wave propagation is often proposed in terms of the pressure $p$, namely
\begin{equation}
\nabla^2 g_p + \frac{\omega^2 \rho}{\lambda} g_p = -\delta(\bd{r} - \bd{r}')
\end{equation}
with solution $g_p = g_p^{\mathrm{f}}(\bd{r};\bd{r}') = g_c(\bd{r};\bd{r}')$.
For the displacement $\bd{u}$, the corresponding Green's function $\bd{g}_{\bd{u}}^{\mathrm{f}}$ satisfies
\begin{equation}\label{eq-ewq-guf}
\gu^\mathrm{f} = \frac{1}{\omega^2 \rho} \nabla g_p^{\mathrm{f}}
\end{equation}
which straightforwardly follows \cref{eq-euler-fluid}.

\subsection{The elastic wave equation in layered media}
In layered media, again we suppose the space is horizontally stratified as layers $0,\cdots,L$ arranged from top to bottom, separated by planes $z=d_0, \cdots, z=d_{L-1}$ where $d_0 > \cdots > d_{L-1}$, and the medium in each layer is homogeneous with density $\rho_l$ and Lam\'e constants $\lambda_l$ and $\mu_l$, $l=0,\cdots,L$, respectively.
Define $\gamma = \lambda + 2\mu$, and the wave numbers
\begin{equation}
k_{s} = \sqrt{\frac{\omega^2 \rho}{\mu}},\quad k_{c} = \sqrt{\frac{\omega^2 \rho}{\gamma}}
\end{equation}
are the same as in the free space.
The notation for variables in different layers follows the convention introduced at the beginning of \Cref{sect-Green-layer}.
In fluid media, $k_s$ is not defined because $\mu = 0$ and S-wave does not propagate.

\subsubsection{The interface conditions of the elastic wave equation}
Before entering the discussion on the Green's functions, we take a detailed study on the interface conditions.
Recall the equation \cref{eq-ewq-st} in terms of the stress tensor $\bs{\mathcal{T}}$, if the external force $\bd{b}$ and the displacement $\bd{u}$ are not singular near the interface, then by divergence theorem on a flat cylinder crossing the boundary with infinitesimal thickness, we get the interface condition
\begin{equation}\label{eq-ewq-interface-st}
\left \lbr \bd{n} \cdot \bs{\mathcal{T}} \right \rbr = \vec{0}.
\end{equation}
With $\bd{n} = \bd{e}_3$ in our problem, it suffices to consider the entries
\begin{align}
\begin{split}
\mathcal{T}_{31} &= \mu \left( \frac{\partial u_3}{\partial x_1} + \frac{\partial u_1}{\partial x_3} \right), \\
\mathcal{T}_{32} &= \mu \left( \frac{\partial u_3}{\partial x_2} + \frac{\partial u_2}{\partial x_3} \right), \\
\mathcal{T}_{33} &= \gamma\frac{\partial u_3}{\partial x_3} + \lambda \left( \frac{\partial u_1}{\partial x_1} + \frac{\partial u_2}{\partial x_2} \right)
\end{split}
\end{align}
for the continuity equations.
The identity \cref{eq-ewq-interface-st} also implies some additional regular conditions to be listed below \cite{chew2008}.
\begin{itemize}
\item Across the solid-solid interface, $\partial u_3 / \partial x_3$ must be regular, so $\lbr u_3 \rbr = 0$.
Then from the continuity of $\mathcal{T}_{31}$ we further get $\lbr u_1 \rbr = 0$, and similarly $\lbr u_2 \rbr = 0$.
\item Across the solid-liquid/gas interface, we also have $\lbr u_3 \rbr = 0$.
Since in the fluid side $\mu = 0$ so that $\mathcal{T}_{31} = \mathcal{T}_{32} = 0$, no additional condition is required.
\item Across the fluid-fluid interface, the conditions on $\mathcal{T}_{31}$ and $\mathcal{T}_{32}$ are no more necessary, therefore only $\lbr \mathcal{T}_{33}\rbr = 0$ and $\lbr u_3 \rbr = 0$ are required.
Note that in this case $\mathcal{T}_{33} = \lambda \nabla \cdot \bd{u} = p$, and $u_3 = -(\omega^2 \rho)^{-1} \partial p / \partial x_3$.
\item In the half-space problem where the media has an interface against the vacuum where no acoustic wave propagates and $\lambda = \mu = 0$, the zero-traction conditions $\lbr \mathcal{T}_{3l} \rbr = 0$ are required, $l = 1,2,3$, while the displacement on the boundary is set free.
\end{itemize}
In summary, the interface equations by case are listed below:
\begin{itemize}
\item Across the solid-solid interface,
\begin{align}\label{eq-ewq-if-ss}
\lbr \mathcal{T}_{3l} \rbr = 0, \quad \lbr u_{l} \rbr = 0, \quad l = 1,2,3.
\end{align}
\item Across the solid-fluid interface,
\begin{equation}\label{eq-ewq-if-sf}
\lbr u_{3} \rbr = 0, \quad \lbr \mathcal{T}_{3l} \rbr = 0, \quad l = 1,2,3,
\end{equation}
where from the fluid side $\mathcal{T}_{31} = \mathcal{T}_{32} = 0$ automatically holds.
\item Across the fluid-fluid interface,
\begin{equation}\label{eq-ewq-if-ff}
\lbr u_{3} \rbr = 0, \quad \lbr \mathcal{T}_{33} \rbr = 0,
\end{equation}
or equivalently, in terms of the pressure $p$,
\begin{equation}\label{eq-ewq-if-ff-p}
\left\lbr \frac{1}{\rho} \frac{\partial p}{\partial z} \right \rbr = 0, \quad \lbr p \rbr = 0.
\end{equation}
\item On the solid-vacuum interface,
\begin{equation}\label{eq-ewq-if-sv}
\quad \lbr \mathcal{T}_{3l} \rbr = 0, \quad l = 1,2,3.
\end{equation}
\item On the fluid-vacuum interface,
\begin{equation}\label{eq-ewq-if-fv}
\quad \lbr \mathcal{T}_{33} \rbr = 0.
\end{equation}
\end{itemize}
To flexibly handle various interface circumstances, we separate these equations into four groups: (a) $\lbr \mathcal{T}_{33} \rbr = 0$; (b) $\lbr u_{3} \rbr = 0$; (c) $\lbr \mathcal{T}_{31} \rbr = \lbr \mathcal{T}_{32} \rbr = 0$ ; (d) $\lbr u_1 \rbr = \lbr u_2 \rbr = 0$.
For each of the above cases we simply join them in need.

When the source is in solid medium, the Green's function in terms of the displacement $\bd{G}$ is a $3 \times 3$ tensor.

\subsection{The dyadic Green's function in layered media with source in solid}\label{section-Gf-source-in-solid}
We begin with the case when all the layers are solid.
Then the case with existing fluid or vacuum layers will follow.
Suppose the source $\bd{r}' = (x',y',z')$ locates in layer $j$, and the target $\bd{r} = (x,y,z)$ is in layer $t$.
Take the 2-D Fourier transform $(x-x', y-y')$ to $(k_x, k_y)$ as defined in \cref{eq-2DFT}.
Suppose $(\kr, \alpha)$ are the polar coordinates of $(k_x, k_y)$.
Like in \cref{eq-GAf} for the Maxwell's equations, define the reaction field Green's function
\begin{equation}
\bd{G}^{\mathrm{r}} = \bd{G} - \delta_{j,t}\bd{G}^\mathrm{f},
\end{equation}
then $\bd{G}^\mathrm{r}$ satisfies the homogeneous elastic wave equation
\begin{equation}\label{eq-ewq-Gr}
(\lambda+\mu)\nabla \nabla \cdot \bd{G}^\mathrm{r} + \mu \nabla^2 \bd{G}^\mathrm{r} + \omega^2 \rho \bd{G}^\mathrm{r} = \bd{0}
\end{equation}
within each layer, which, in the frequency domain, is an ordinary differential equation
\begin{equation}\label{eq-Gr-ODE}
(\mu \J{1} + \gamma \J{2}) \partial_{zz} \widehat{\bd{G}}^\mathrm{r} + (\lambda + \mu)(\J{3} + \J{4}) \partial_{z} \widehat{\bd{G}}^\mathrm{r} + \left( \mu k_{sz}^2(\J{1} + \J{2}) + (\lambda+\mu)\J{5} \right)\widehat{\bd{G}}^\mathrm{r} = \bd{0}
\end{equation}
of $z$, here
\begin{equation}
k_{sz} = \sqrt{k_s^2 - \kr^2}, \quad k_{cz} = \sqrt{k_c^2 - \kr^2}.
\end{equation}
The above differential equation \cref{eq-Gr-ODE} has the general solution
\begin{align}\label{eq-homo-ewq-general}
\begin{split}
\widehat{\bd{G}}^\mathrm{r} = \sum_{\ast \in \{ \uparrow, \downarrow \}} \left( (- \tau^\ast \i k_{sz} \J{1} + \J{4}) e^{\tau^\ast\i k_{sz} z} + (\tau^\ast \i k_{cz} \J{2} + \J{3}) e^{\tau^\ast \i k_{cz} z} \right) \bd{X}^{\mathrm{r}\ast}
\end{split}
\end{align}
where $\tau^\uparrow = +1, \tau^\downarrow = -1$, $\bd{X}^{\mathrm{r}\ast} = \bd{X}_t^{\mathrm{r}\ast}$ is piecewise constant (with respect to the variable $z$) in each layer.
The free space part $\widehat{\bd{G}}^{\mathrm{f}}$ given by \cref{eq-ewq-Gf} takes the following form in the frequency domain:
\begin{align}\label{eq-ewq-Gf-J}
\begin{split}
\widehat{\bd{G}}^{\mathrm{f}} ={} & \frac{1}{\omega^2 \rho_j} (k_{s,j}^2 \J{1} + \kr^2 \J{2} + \i k_{sz,j} \tau \J{3} + \i k_{sz,j} \tau \J{4} + \J{5}) \widehat{g}_{s} \\
&- \frac{1}{\omega^2 \rho_j} (-k_{cz,j}^2 \J{2} + \i k_{cz,j} \tau \J{3} + \i k_{cz,j} \tau \J{4} + \J{5}) \widehat{g}_{c}
\end{split}
\end{align}
for $z \ne z'$, where $\tau$ is the sign function of $z - z'$,
\begin{equation}
\widehat{g}_s = \frac{\i e^{\i k_{sz,j}|z-z'|}}{2 k_{sz, j}}, \quad \widehat{g}_c = \frac{\i e^{\i k_{cz,j}|z-z'|}}{2 k_{cz, j}}.
\end{equation}
$\widehat{\bd{G}}^\mathrm{f}$ actually has the same form as in \cref{eq-homo-ewq-general}:
\begin{align}\label{eq-ewq-Gf-general}
\begin{split}
\widehat{\bd{G}}^\mathrm{f} = \sum_{\ast \in \{ \uparrow, \downarrow \}} \left( (-\tau^\ast \i k_{sz,j} \J{1} + \J{4}) e^{\tau^\ast\i k_{sz,j} z} + (\tau^\ast \i k_{cz,j} \J{2} + \J{3}) e^{\tau^\ast \i k_{cz,j} z} \right) \bd{X}^{\mathrm{f}\ast},
\end{split}
\end{align}
where
\begin{align}
\begin{split}
\bd{X}^{\mathrm{f} \uparrow} &= 1_{\{z>z'\}} ( x_{1}^{\mathrm{f} \uparrow} \J{1} + \cdots + x_{5}^{\mathrm{f} \uparrow} \J{5}), \\
\bd{X}^{\mathrm{f} \downarrow} &= 1_{\{z<z'\}} ( x_{1}^{\mathrm{f} \downarrow} \J{1} + \cdots + x_{5}^{\mathrm{f} \downarrow} \J{5}), \\
x_1^{\mathrm{f} \ast} &= k_{s,j}^2 D_s, \\
x_2^{\mathrm{f} \ast} &= -k_{cz,j}^2 D_c, \\
x_3^{\mathrm{f} \ast} &= \tau^\ast \i k_{sz,j} D_s, \\
x_4^{\mathrm{f} \ast} &= \tau^\ast \i k_{cz,j} D_c, \\
x_5^{\mathrm{f} \ast} &= D_s
\end{split}
\end{align}
provided $z \ne z'$, here
\begin{equation}
D_s = \frac{-1}{2\omega^2 \rho_j k_{sz,j}^2}, \quad D_c = \frac{-1}{2\omega^2 \rho_j k_{cz,j}^2}.
\end{equation}
In total, we get the reaction field decomposition of $\widehat{\bd{G}}$ by
\begin{equation}\label{eq-ewq-G-X}
\widehat{\bd{G}} = \sum_{\ast \in \{\uparrow, \downarrow\}} \left( (-\tau^\ast \i k_{sz} \J{1} + \J{4}) e^{\tau^\ast\i k_{sz} z} + (\tau^\ast \i k_{cz} \J{2} + \J{3}) e^{\tau^\ast \i k_{cz} z} \right) \bd{X}^{\ast},
\end{equation}
where
\begin{equation}\label{eq-ewq-X}
\bd{X}^\ast = \delta_{j,t}\bd{X}^{\mathrm{f}\ast} + \bd{X}^{\mathrm{r}\ast}.
\end{equation}

The unknowns $\bd{X}^{\mathrm{r}\ast}$ from all $L+1$ layers should be determined by the interface equations and the radiation equations specified below.

The interface conditions of the Green's function are given by \cref{eq-ewq-if-ss}--\cref{eq-ewq-if-sv} by cases.
One can re-organize them into the representation using matrices $\J{1}, \cdots, \J{5}$ in the following way:
(a) from $\lbr \mathcal{T}_{33} \rbr = 0$ we set up the interface equations
\begin{align}\label{eq-ewq-t33}
& \left \lbr \gamma \frac{\partial G_{3l}}{\partial z} + \lambda \left( \frac{\partial G_{1l}}{\partial x} + \frac{\partial G_{2l}}{\partial y} \right) \right \rbr = 0, \quad l=1,2,3 \\
\Leftrightarrow & \left\lbr (\lambda \J{4} + \gamma \partial_z \J{2}) \widehat{\bd{G}} \right\rbr = \bd{0} \\
\Leftrightarrow & \left\lbr \sum_{\ast \in \{\uparrow, \downarrow\}} \left( 2\mu \tau^\ast \i k_{sz} e^{\tau^\ast \i k_{sz} z} \J{4} + (-\lambda \kr^2 - \gamma k_{cz}^2) e^{\tau^\ast \i k_{cz} z} \J{2} \right) \bd{X}^\ast \right \rbr = \bd{0};
\end{align}
(b) from $\lbr u_3 \rbr = 0$ we set up the interface equations
\begin{align}\label{eq-ewq-u3}
& \lbr G_{3l} \rbr = 0, \quad l=1,2,3 \\
\Leftrightarrow & \left \lbr \J{2} \widehat{\bd{G}} \right \rbr = \bd{0} \\
\Leftrightarrow & \left\lbr \sum_{\ast \in \{\uparrow, \downarrow\}} \left( e^{\tau^\ast \i k_{sz} z} \J{4} + \tau^\ast \i k_{cz} e^{\tau^\ast \i k_{cz} z} \J{2} \right) \bd{X}^\ast \right \rbr = \bd{0};
\end{align}
(c) from $\lbr \mathcal{T}_{31} \rbr = \lbr \mathcal{T}_{32} \rbr = 0$ we set up the interface equations
\begin{align}\label{eq-ewq-t31t32}
& \left \lbr \mu \left( \frac{\partial G_{3l}}{\partial x} + \frac{\partial G_{1l}}{\partial z} \right) \right \rbr = \left \lbr \mu \left( \frac{\partial G_{3l}}{\partial y} + \frac{\partial G_{2l}}{\partial z} \right) \right \rbr = 0, \quad l = 1,2,3 \\
\Leftrightarrow & \left \lbr \mu \left( \partial_z \J{1} + \J{3}\right) \widehat{\bd{G}} \right \rbr = \bd{0} \\
\Leftrightarrow & \left\lbr \sum_{\ast \in \{\uparrow, \downarrow\}} \left( \mu k_{sz}^2 e^{\tau^\ast \i k_{sz} z} \J{1} + \mu e^{\tau^\ast \i k_{sz} z} \J{5} + 2\mu \tau^\ast \i k_{cz} e^{\tau^\ast \i k_{cz} z} \J{3} \right) \bd{X}^\ast \right \rbr = \bd{0};
\end{align}
(d) from $\lbr u_1 \rbr = \lbr u_2 \rbr = 0$ we set up the interface equations
\begin{align}\label{eq-ewq-u1u2}
& \lbr G_{1l} \rbr = \lbr G_{2l} \rbr = 0, \quad l = 1,2,3 \\
\Leftrightarrow & \left \lbr \J{1} \widehat{\bd{G}} \right \rbr = \bd{0} \\
\Leftrightarrow & \left\lbr \sum_{\ast \in \{\uparrow, \downarrow\}} \left( -\tau^\ast \i k_{sz} e^{\tau^\ast \i k_{sz} z} \J{1} + e^{\tau^\ast \i k_{cz} z} \J{3} \right) \bd{X}^\ast \right \rbr = \bd{0}.
\end{align}

The radiation conditions are similarly treated like in the Maxwell's equations.
The limits as $z \to \pm \infty$ are reduced to
\begin{equation}\label{eq-ewq-radiation-inf}
\left( (\i k_{sz,0} \J{1} + \J{4}) e^{-\i k_{sz,0} z} + (-\i k_{cz,0} \J{2} + \J{3}) e^{-\i k_{cz,0} z} \right) \bd{X}_{0}^{\downarrow} \to \bd{0}
\end{equation}
as $z \to \infty$ in the top layer, which can be simplified as $\bd{X}_0^{\mathrm{r}\downarrow} = \bd{0}$, and
\begin{equation}\label{eq-ewq-radiation-minf}
\left( (-\i k_{sz,L} \J{1} + \J{4}) e^{\i k_{sz,L} z} + (\i k_{cz,L} \J{2} + \J{3}) e^{\i k_{cz,L} z} \right) \bd{X}_{L}^{\uparrow} \to \bd{0}
\end{equation}
as $z \to -\infty$ in the bottom layer, which can be simplified as $\bd{X}_L^{\mathrm{r}\uparrow} = \bd{0}$, i.e. propagating directions from $z=\pm\infty$ are prohibited.

Since the above equations are linear equations of $\bd{X}^{\mathrm{r}\ast}$ for all layers with coefficients in $\mathbb{F}_0$, and are sufficient to uniquely determine the Green's function, i.e. the unknowns $\bd{X}^{\mathrm{r}\ast}$ within each layer, by \cref{thm-solution-filtering}, each
\begin{equation}
\bd{X}_t^{\mathrm{r}\ast} \in \mathfrak{R}^0
\end{equation}
has the matrix basis formulation.

Again we try to reinterpret the interface equations and the radiation equations so that they're written in terms of the matrix basis coefficients.
Suppose in each layer $\bd{X}_t^{\mathrm{r}\ast}$ has the matrix basis expansion
\begin{equation}\label{eq-ewq-Xt-expansion}
\bd{X}_t^{\mathrm{r}\ast} = \sum_{l=1}^{5} x_{l,t}^{\mathrm{r}\ast} \J{l},
\end{equation}
then, we can decompose
\begin{align}
\begin{split}
\bd{X}^\ast = \sum_{l=1}^{5} x_{l}^{\ast} \J{l},
\end{split}
\end{align}
where
\begin{align}
\begin{split}
x_{l}^\uparrow &= \delta_{j,t} 1_{\{ z > z' \}} x_{l}^{\mathrm{f} \uparrow} + x_{l}^{\mathrm{r} \uparrow}, \\
x_{l}^\downarrow &= \delta_{j,t} 1_{\{ z < z' \}} x_{l}^{\mathrm{f} \downarrow} + x_{l}^{\mathrm{r} \downarrow}
\end{split}
\end{align}
are the reaction field decomposition of $x_l^\ast$.
Using \cref{eq-ewq-G-X} and the matrix basis representation of $\bd{X}^\ast$, the radiation equations are simply
\begin{equation}
x_{l,0}^\downarrow = x_{l,L}^\uparrow = 0, \quad l = 1,2,3,4,5.
\end{equation}
For the interface equations the groups (a)--(d) are listed below.
(a) $\lbr \mathcal{T}_{33} \rbr = 0$ was equivalent to \cref{eq-ewq-t33} and now interpreted as
\begin{equation}\label{eq-ewq-t33-J}
\lbr T_1 \J{2} + T_2 \J{4} \rbr = \bd{0} \quad \Leftrightarrow \quad \lbr T_1 \rbr = \lbr T_2 \rbr = 0,
\end{equation}
where
\begin{align}
T_1 &= \sum_{\ast \in \{\uparrow, \downarrow \}} -2\mu \tau^\ast \i k_{sz} \kr^2 e^{\tau^\ast \i k_{sz} z} x_3^\ast + (-\lambda \kr^2 - \gamma k_{cz}^2) e^{\tau^\ast \i k_{cz} z} x_2^\ast, \\
T_2 &= \sum_{\ast \in \{\uparrow, \downarrow \}} 2\mu \tau^\ast \i k_{sz} e^{\tau^\ast \i k_{sz} z} (x_1^\ast - \kr^2 x_5^\ast) + (-\lambda \kr^2 - \gamma k_{cz}^2) e^{\tau^\ast \i k_{cz} z} x_4^\ast.
\end{align}
(b) $\lbr u_3 \rbr = 0$ was equivalent to \cref{eq-ewq-u3} and now interpreted as
\begin{equation}\label{eq-ewq-u3-J}
\lbr T_3 \J{2} + T_4 \J{4} \rbr = \bd{0} \quad \Leftrightarrow \quad \lbr T_3 \rbr = \lbr T_4 \rbr = 0,
\end{equation}
where
\begin{align}
T_3 &= \sum_{\ast \in \{\uparrow, \downarrow \}} -\kr^2  e^{\tau^\ast \i k_{sz} z}x_3^\ast + \tau^\ast \i k_{cz} e^{\tau^\ast\i k_{cz} z} x_2^\ast, \\
T_4 &= \sum_{\ast \in \{\uparrow, \downarrow \}} e^{\tau^\ast \i k_{sz} z} (x_1^\ast - \kr^2 x_5^\ast) + \tau^\ast \i k_{cz} e^{\tau^\ast\i k_{cz} z} x_4^\ast.
\end{align}
(c) $\lbr \mathcal{T}_{31} \rbr = \lbr \mathcal{T}_{32} \rbr =0 $ was equivalent to \cref{eq-ewq-t31t32} and now interpreted as
\begin{equation}\label{eq-ewq-t31t32-J}
\lbr T_5 \J{1} + T_6 \J{3} + T_7 \J{5} \rbr = \bd{0} \quad \Leftrightarrow \quad \lbr T_5 \rbr = \lbr T_6 \rbr = \lbr T_7 \rbr = 0,
\end{equation}
where
\begin{align}
T_5 &= \sum_{\ast \in \{\uparrow, \downarrow \}} \mu k_{sz}^2 e^{\tau^\ast \i k_{sz} z}x_1^\ast , \\
T_6 &= \sum_{\ast \in \{\uparrow, \downarrow \}} \mu (k_{sz}^2 - \kr^2) e^{\tau^\ast \i k_{sz} z} x_3^\ast + 2 \mu \tau^\ast \i k_{cz} e^{\tau^\ast\i k_{cz} z} x_2^\ast, \\
T_7 &= \sum_{\ast \in \{\uparrow, \downarrow \}} \mu e^{\tau^\ast \i k_{sz} z}x_1^\ast + \mu (k_{sz}^2 - \kr^2) e^{\tau^\ast \i k_{sz} z} x_5^\ast + 2\mu \tau^\ast \i k_{cz} e^{\tau^\ast\i k_{cz} z} x_4^\ast.
\end{align}
(d) $\lbr u_1 \rbr = \lbr u_2 \rbr = 0$ was equivalent to \cref{eq-ewq-u1u2} and now interpreted as
\begin{equation}\label{eq-ewq-u1u2-J}
\lbr T_8 \J{1} + T_9 \J{3} + T_{10} \J{5} \rbr = \bd{0} \quad \Leftrightarrow \quad \lbr T_8 \rbr = \lbr T_9 \rbr = \lbr T_{10} \rbr = 0,
\end{equation}
where
\begin{align}
T_8 &= \sum_{\ast \in \{\uparrow, \downarrow \}} -\tau^\ast \i k_{sz} e^{\tau^\ast \i k_{sz} z}x_1^\ast, \\
T_9 &= \sum_{\ast \in \{\uparrow, \downarrow \}} -\tau^\ast \i k_{sz} e^{\tau^\ast \i k_{sz} z}x_3^\ast + e^{\tau^\ast\i k_{cz} z} x_2^\ast, \\
T_{10} &= \sum_{\ast \in \{\uparrow, \downarrow \}} -\tau^\ast \i k_{sz} e^{\tau^\ast \i k_{sz} z}x_5^\ast + e^{\tau^\ast\i k_{cz} z} x_4^\ast.
\end{align}

The interface equations can be divided into two groups of continuity equations for solving, depending on the involvement of the unknowns: $x_1^\ast, x_4^\ast$ and $x_5^\ast$ consist the continuity equations of $T_2, T_4, T_5$, $T_7, T_8$ and $T_{10}$, while $x_2^\ast$ and $x_3^\ast$ appear in the other group with members $T_1, T_3, T_6$ and $T_9$.

\begin{remark}[separation of elastic waves in the matrix basis formulation]
When representing $T_1, \cdots, T_{10}$ using $x_1^\ast, \cdots, x_5^{\ast}$, the S-wave parts with $z$-dependence $e^{\pm \i k_{sz}}$ are only related with $x_1^{\ast}, x_3^\ast$ and $x_5^\ast$, and the P-wave parts are only related with $x_2^\ast$ and $x_4^\ast$.
In the matrix basis representation of $\bd{X}^\ast$, $x_1, x_3$ and $x_5$ takes the first and the second row of the matrix together with the basis matrices $\J{1}, \J{3}$ and $\J{5}$, and $x_2, x_4$ takes the last row of $\bd{X}^\ast$ together with the basis matrices $\J{2}$ and $\J{4}$.
Therefore, the rows of $\bd{X}^\ast$ naturally separates the elastic wave into the S-wave part and the P-wave part.
\end{remark}

\subsubsection{Case when there's a vacuum half space}
Now consider the case when a vacuum half space is added to the problem.
Without loss of generality, suppose the top layer is replaced by the vacuum.
For convenience we define $\bd{X}^{\mathrm{r}\ast} = \bd{0}$ in the vacuum side.
The interface equations at $z = d_0 - 0$ are the zero-traction conditions
\begin{equation}
\mathcal{T}_{31} = \mathcal{T}_{32} = 0, \quad \mathcal{T}_{33} = 0,
\end{equation}
which are corresponding to \cref{eq-ewq-t31t32} and \cref{eq-ewq-t33}, respectively, except from the fact that wave propagation does not exist in the vacuum side.
Hence the matrix basis representation of \cref{eq-ewq-t31t32} and \cref{eq-ewq-t33} can be ported for these conditions, resulting in the following equations at $z = d_0 - 0$:
\begin{align}
\sum_{\ast \in \{\uparrow, \downarrow\}} \left( \mu k_{sz}^2 e^{\tau^\ast \i k_{sz} z} \J{1} + \mu e^{\tau^\ast \i k_{sz} z} \J{5} + 2\mu \tau^\ast \i k_{cz} e^{\tau^\ast \i k_{cz} z} \J{3} \right) \bd{X}^\ast = \bd{0},
\end{align}
\begin{align}
\sum_{\ast \in \{\uparrow, \downarrow\}} \left( 2\mu \tau^\ast \i k_{sz} e^{\tau^\ast \i k_{sz} z} \J{4} + (-\lambda \kr^2 - \gamma k_{cz}^2) e^{\tau^\ast \i k_{cz} z} \J{2} \right) \bd{X}^\ast = \bd{0}.
\end{align}
The radiation condition \cref{eq-ewq-radiation-inf} from $z = +\infty$ is no longer needed.
Again we have collected the restricting equations of $\bd{X}_t^{\mathrm{r}\ast}$ for $1 \le t \le L$, and the same theories can be applied.
The case for two half-space vacuum layers is similar.

\subsubsection{Case when there's a fluid layer}
Finally we consider the case when one or more fluid layer is involved.
Our goal is to justify that $\widehat{\bd{G}}$ still has the matrix basis formulation within each layer.

Since we suppose the source is from a solid layer, in any fluid layer $\widehat{\bd{G}} = \widehat{\bd{G}}^\mathrm{r}$.
Because the S-wave does not propagate in fluid media, the formulation \cref{eq-ewq-G-X} can not be directly used.
Therefore we must take a step back to the acoustic wave equation in the fluid
\begin{align}\label{eq-ewq-fluid-acoustic}
\lambda \nabla \nabla \cdot {\bd{G}}^{\mathrm{r}} + \omega^2 \rho {\bd{G}}^{\mathrm{r}} &= \bd{0}, \\
\nabla \times {\bd{G}}^{\mathrm{r}} &= \bd{0}.
\end{align}
In the frequency domain, the general solution of these equations is given by
\begin{equation}\label{eq-ewq-fluid-Gr-general}
\widehat{\bd{G}}^{\mathrm{r}} = \begin{bmatrix}
\i k_x \\ \i k_y \\ \i k_{cz}
\end{bmatrix} \begin{bmatrix}
\psi_1^\uparrow \\
\psi_2^\uparrow \\
\psi_3^\uparrow
\end{bmatrix}^T e^{\i k_{cz} z} + \begin{bmatrix}
\i k_x \\ \i k_y \\ -\i k_{cz}
\end{bmatrix} \begin{bmatrix}
\psi_1^\downarrow \\
\psi_2^\downarrow \\
\psi_3^\downarrow
\end{bmatrix}^T e^{-\i k_{cz} z}
\end{equation}
where each $\psi_{l}^\ast$ does not depend on $z$.
Here we take a weaker form while keeping the $z$ variable separating
\begin{equation}
\widehat{\bd{G}}^{\mathrm{r}} = \widehat{\bd{G}}^{\mathrm{r}\uparrow} e^{\i k_{cz} z} + \widehat{\bd{G}}^{\mathrm{r}\downarrow} e^{-\i k_{cz} z},
\end{equation}
and treat $\widehat{\bd{G}}^{\mathrm{r}\ast}$ as unknowns (which are independent of $z$).
In the frequency domain, the first equation of \cref{eq-ewq-fluid-acoustic} is reinterpreted as
\begin{align}\label{eq-ewq-fluid-acoustic-equiv1}
\begin{split}
{}&{} \lambda \left( -k_{cz}^2 \J{2} + \i k_{cz}\J{3} + \i k_{cz} \J{4} + \J{5} \right)\widehat{\bd{G}}^{\mathrm{r}\uparrow} e^{\i k_{cz} z} \\
+ {}&{} \lambda \left( -k_{cz}^2 \J{2} - \i k_{cz}\J{3} - \i k_{cz} \J{4} + \J{5} \right) \widehat{\bd{G}}^{\mathrm{r}\downarrow} e^{-\i k_{cz} z} = \bd{0},
\end{split}
\end{align}
and the second equation is reinterpreted as $(\J{6} + \J{7} - \partial_z \J{9}) \widehat{\bd{G}}^{\mathrm{r}} = \bd{0}$ and is equivalent to the \emph{pair} of equations
\begin{align}\label{eq-ewq-fluid-acoustic-equiv2}
\begin{split}
(\kr^2 \J{2} + \i k_{cz} \J{4}) \widehat{\bd{G}}^{\mathrm{r}\uparrow} e^{\i k_{cz} z} + (\kr^2 \J{2} - \i k_{cz} \J{4}) \widehat{\bd{G}}^{\mathrm{r}\downarrow} e^{-\i k_{cz} z} = \J{6} (\J{6} + \J{7} - \partial_z \J{9} ) \widehat{\bd{G}}^\mathrm{r} &= \bd{0}, \\
(\kr^2 \J{1} + \J{5}) (\widehat{\bd{G}}^{\mathrm{r}\uparrow} e^{\i k_{cz} z} + \widehat{\bd{G}}^{\mathrm{r}\downarrow} e^{-\i k_{cz} z}) = \J{7} (\J{6} + \J{7} - \partial_z \J{9} ) \widehat{\bd{G}}^\mathrm{r} &= \bd{0},
\end{split}
\end{align}
which are written using only $\J{1}, \cdots, \J{5}$.
The radiation condition, if applied, is simply $\widehat{\bd{G}}^{\mathrm{r} \ast} = \bd{0}$ for the corresponding prohibited propagation direction.
The interface equations are a subset of \cref{eq-ewq-t33}, \cref{eq-ewq-u3}, \cref{eq-ewq-t31t32} and \cref{eq-ewq-u1u2}.
For each of these equations, in the solid side we adopt the representation using unknowns $\bd{X}^{\mathrm{r} \ast}$, and in the fluid side we use the representation with unknowns $\widehat{\bd{G}}^{\mathrm{r}\ast}$.
The resulting series of restricting equations is a linear system of unknowns from each layer with coefficients in $\mathfrak{R}^0$, so \Cref{thm-solution-filtering} can be applied.
Together with the uniqueness of the Green's function, we again conclude that each $\bd{X}^{\mathrm{r} \ast}$ from solid layers and each $\widehat{\bd{G}}^{\mathrm{r} \ast}$ from fluid layers have a matrix basis representation.

\begin{remark}
We can still separate the unknown matrix basis coefficients into groups exactly like the pure solid case, i.e. the equations of $T_2, T_4, T_5$, $T_7, T_8$ and $T_{10}$ consist one group and the rest consist the other one.
\end{remark}

\subsection{The dyadic Green's function in layered media with source in fluid}
When the source is in a fluid layer, the Green's function $g_p$ with respect to the pressure is a scalar function, and the corresponding $\gu$ with respect to the displacement is a vector function, hence no $3 \times 3$ tensor seems involved.
However, we can still simplify the formulation of $\gu$ by using a vector version derived from the matrix basis \cref{mat-J}.

\subsubsection{The vector basis}
Define $3 \times 1$ vectors
\begin{equation}
\bd{j}_2 = \begin{bmatrix}
0 \\ 0 \\ 1
\end{bmatrix}, \quad \bd{j}_3 = \begin{bmatrix}
\i k_x \\ \i k_y \\ 0
\end{bmatrix}, \quad \bd{j}_7 = \begin{bmatrix}
\i k_y \\ -\i k_x \\ 0
\end{bmatrix},
\end{equation}
which are the third column of $\J{2}, \J{3}$ and $\J{7}$, respectively.
The vectors $\bd{j}_2, \bd{j}_3$ and $\bd{j}_7$ exactly consist all the non-trivial third columns of the matrix basis $\J{1}, \cdots, \J{9}$.
Hence the product between a $\bd{J}_i$ matrix and a $\bd{j}_j$ vector
\begin{equation}
\bd{J}_i \cdot \bd{j}_j = \J{i} \cdot \bd{J}_j \bd{e}_3 = (\bd{J}_i \bd{J}_j) \bd{e}_3
\end{equation}
is still a linear combination of $\bd{j}_2, \bd{j}_3$ and $\bd{j}_7$ (with coefficients $\pm 1, \pm \kr^2$).

Similar to the linear spaces of block matrices defined in \cref{eq-block-matrices}, for any $p, q \in \mathbb{N}$ and any subfield $\mathbb{K} \subset \mathbb{F}$, define
\begin{align}\label{eq-ewq-block-vectors}
\begin{split}
\mathfrak{r}_{p\times q}(\mathbb{K}) &= \left\{ \bd{K}_2 \otimes \bd{j}_2 + \bd{K}_3 \otimes \bd{j}_3: \bd{K}_2, \bd{K}_3 \in \mathbb{K}^{p \times q} \right\}, \\
\mathfrak{i}_{p\times q}(\mathbb{K}) &= \left\{ \bd{K}_7 \otimes \bd{j}_7: \bd{K}_7 \in \mathbb{K}^{p \times q} \right\}, \\
\mathfrak{m}_{p \times q}(\mathbb{K}) &= \mathfrak{r}_{p\times q}(\mathbb{K}) \oplus \mathfrak{i}_{p\times q}(\mathbb{K}).
\end{split}
\end{align}
Note that $\mathfrak{i}_{p\times q}(\mathbb{F}) = \mathbb{F}^{3p \times q}$ because $\bd{j}_2, \bd{j}_3$ and $\bd{j}_7$ are linearly independent.
The solution filtering \Cref{thm-solution-filtering} is accordingly adjusted as follows.
\begin{theorem}[Vector solution filtering] \label{thm-vector-solution-filtering}
Suppose $p,q,r \in \mathbb{N}$, the block matrices $\bar{\bd{A}} \in \mathfrak{R}_{p \times r}(\mathbb{F}_0)$, $\bar{\bd{X}} \in \mathfrak{m}_{r \times q}(\mathbb{F})$ and $\bar{\bd{B}} \in \mathfrak{r}_{p \times q}(\mathbb{F}_0)$ satisfy $\bar{\bd{A}} \cdot \bar{\bd{X}} = \bar{\bd{B}}$.
Then, there exists a ``filtered'' block matrix $\bar{\bd{X}}_0 \in \mathfrak{r}_{r \times q}(\mathbb{F}_0)$ such that $\bar{\bd{A}} \cdot \bar{\bd{X}}_0 = \bar{\bd{B}}$.
\end{theorem}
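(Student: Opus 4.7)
The plan is to mimic the two-stage filtering used in the proof of Theorem~\ref{thm-solution-filtering}, replacing the matrix product rules by their vector analogues. First I would establish the vector counterpart of Proposition~\ref{lemma-product-rule-block}: for any subfield $\mathbb{K}\subset\mathbb{F}$ containing $\kr^2$,
\begin{equation*}
\mathfrak{R}_{p\times r}(\mathbb{K})\cdot\mathfrak{r}_{r\times q}(\mathbb{K})\subset\mathfrak{r}_{p\times q}(\mathbb{K}),\quad \mathfrak{R}_{p\times r}(\mathbb{K})\cdot\mathfrak{i}_{r\times q}(\mathbb{K})\subset\mathfrak{i}_{p\times q}(\mathbb{K}),
\end{equation*}
and the two analogous statements with $\mathfrak{R}$ replaced by $\mathfrak{I}$. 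Since $\bd{j}_k=\J{k}\bd{e}_3$ and $\J{i}\bd{j}_k=(\J{i}\J{k})\bd{e}_3$ as already observed in the text, this amounts to inspecting the third columns of the entries in the product table~\cref{eq-matrix-product-table} and checking that an $\mathfrak{R}$-factor keeps the index groups $\{\bd{j}_2,\bd{j}_3\}$ and $\{\bd{j}_7\}$ intact, while an $\mathfrak{I}$-factor swaps them.

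With the vector product rules in hand, I would split $\bar{\bd{X}}=\bar{\bd{X}}_1+\bar{\bd{X}}_2$ with $\bar{\bd{X}}_1\in\mathfrak{r}_{r\times q}(\mathbb{F})$ and $\bar{\bd{X}}_2\in\mathfrak{i}_{r\times q}(\mathbb{F})$. The product rule forces $\bar{\bd{A}}\bar{\bd{X}}_1\in\mathfrak{r}_{p\times q}(\mathbb{F})$ and $\bar{\bd{A}}\bar{\bd{X}}_2\in\mathfrak{i}_{p\times q}(\mathbb{F})$; since $\bar{\bd{B}}\in\mathfrak{r}_{p\times q}(\mathbb{F}_0)\subset\mathfrak{r}_{p\times q}(\mathbb{F})$ has trivial $\mathfrak{i}$-component and $\mathfrak{m}=\mathfrak{r}\oplus\mathfrak{i}$ is direct, uniqueness of the decomposition yields $\bar{\bd{A}}\bar{\bd{X}}_1=\bar{\bd{B}}$ (and $\bar{\bd{A}}\bar{\bd{X}}_2=\bd{0}$). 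This reduces the problem to $\bar{\bd{X}}\in\mathfrak{r}_{r\times q}(\mathbb{F})$ while leaving $\bar{\bd{A}}$ and $\bar{\bd{B}}$ unchanged.

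Finally I would descend from $\mathbb{F}$ to $\mathbb{F}_0$ by the same linear-algebra argument used in Theorem~\ref{thm-solution-filtering}. Writing $\bar{\bd{A}}=\sum_{i=1}^{5}\bd{A}_i\otimes\J{i}$, $\bar{\bd{X}}_1=\bd{X}_2\otimes\bd{j}_2+\bd{X}_3\otimes\bd{j}_3$ and $\bar{\bd{B}}=\bd{B}_2\otimes\bd{j}_2+\bd{B}_3\otimes\bd{j}_3$, and reading off $\J{i}\bd{j}_k$ from~\cref{eq-matrix-product-table}, the equation $\bar{\bd{A}}\bar{\bd{X}}_1=\bar{\bd{B}}$ collapses to the $2\times 2$ block system
\begin{equation*}
\begin{bmatrix}\bd{A}_2 & -\kr^2\bd{A}_4\\ \bd{A}_3 & \bd{A}_1-\kr^2\bd{A}_5\end{bmatrix}\begin{bmatrix}\bd{X}_2\\ \bd{X}_3\end{bmatrix}=\begin{bmatrix}\bd{B}_2\\ \bd{B}_3\end{bmatrix},
\end{equation*}
which is precisely the middle $2\times 2$ sub-block of the $5\times 5$ stacked system appearing in the proof of Theorem~\ref{thm-solution-filtering}. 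Its coefficient matrix and right-hand side have all entries in $\mathbb{F}_0$, so the same diagonalization through full-rank base changes over $\mathbb{F}_0$---whose consistency is guaranteed by the already-exhibited solvability over $\mathbb{F}$---produces $\bd{X}_2^0,\bd{X}_3^0\in\mathbb{F}_0^{r\times q}$. Then $\bar{\bd{X}}_0=\bd{X}_2^0\otimes\bd{j}_2+\bd{X}_3^0\otimes\bd{j}_3\in\mathfrak{r}_{r\times q}(\mathbb{F}_0)$ is the filtered solution required.

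The only genuinely new ingredient is the vector product rule of the first paragraph; everything else is a transparent specialization of the matrix proof, so the main (and rather mild) obstacle is simply the bookkeeping in that third-column verification.
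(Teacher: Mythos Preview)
Your proof is correct, but the paper takes a different and somewhat more economical route. Rather than re-run the two-stage filtering argument at the vector level, the paper defines a bijective ``lift'' $\mathcal{P}_{u\times v}:\mathfrak{m}_{u\times v}(\mathbb{K})\to\mathfrak{M}_{u\times v}(\mathbb{K})$ by $\bd{j}_k\mapsto\J{k}$ for $k\in\{2,3,7\}$, observes that $\bar{\bd{A}}\cdot\mathcal{P}_{r\times q}(\bar{\bd{X}})=\mathcal{P}_{p\times q}(\bar{\bd{B}})$ (which is exactly your vector product rule, stated as compatibility of $\mathcal{P}$ with left multiplication by $\bar{\bd{A}}$), applies the already-proved matrix Theorem~\ref{thm-solution-filtering} to get $\bar{\bd{X}}_M\in\mathfrak{R}_{r\times q}(\mathbb{F}_0)$, and then simply strips the first and second column of every $3\times3$ block to return to $\mathfrak{r}_{r\times q}(\mathbb{F}_0)$. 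In effect the paper's proof \emph{is} your proof, but with the $\mathfrak{r}/\mathfrak{i}$ splitting and the $\mathbb{F}\to\mathbb{F}_0$ descent outsourced wholesale to the matrix theorem. Your version has the merit of exposing the underlying $2\times2$ block system explicitly (and noting that it is the middle block of the $5\times5$ stacked system), which makes clear that the vector case is strictly smaller; the paper's version buys brevity at the cost of hiding that structure behind the lift.
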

\begin{proof}
For any $u, v \in \mathbb{N}$, the mapping
\begin{equation}
\mathcal{P}_{u \times v}: \bd{K}_2 \otimes \bd{j}_2 + \bd{K}_3 \otimes \bd{j}_3 + \bd{K}_7 \otimes \bd{j}_7 \in \mathfrak{m}_{u \times v}(\mathbb{K}) \mapsto \bd{K}_2 \otimes  \J{2} + \bd{K} \otimes  \J{3} + \bd{K}_7 \otimes  \J{7} \in \mathfrak{M}_{u \times v}(\mathbb{K})
\end{equation}
is bijective.
It is obvious that
\begin{equation}
\bar{\bd{A}} \cdot \mathcal{P}_{r \times q}(\bar{\bd{X}}) = \mathcal{P}_{p \times q}(\bar{\bd{B}})
\end{equation}
and that $\mathcal{P}_{p \times q}(\bar{\bd{B}}) \in \mathfrak{R}_{p \times q}(\mathbb{F}_0)$.
By \Cref{thm-solution-filtering}, there exists $\bar{\bd{X}}_M \in \mathfrak{R}_{r \times q}(\mathbb{F}_0)$ such that
\begin{equation}
\bar{\bd{A}} \cdot \bar{\bd{X}}_M = \mathcal{P}_{p \times q}(\bar{\bd{B}}).
\end{equation}
Then, by removing the first and the second column of each $3 \times 3$ block from $\bar{\bd{X}}_M$ we get the desired result.
\end{proof}
\subsubsection{Derivation of the vector basis formulation}
We will first take less words to repeat the reaction field decomposition.
The free space Green's function $\gu^{\mathrm{f}}$ in the fluid media with respect to the displacement $\bd{u}$ was given in \cref{eq-ewq-guf}.
In the frequency domain,
\begin{align}\label{eq-ewq-guf-freq}
\begin{split}
\widehat{\bd{g}}_{\bd{u}}^{\mathrm{f}} &= \frac{1}{\omega^2 \rho_j} \begin{bmatrix}
\i k_x \\ \i k_y \\ \partial_z
\end{bmatrix} \widehat{g}_p^{\mathrm{f}} = \frac{1}{\omega^2 \rho_j} \begin{bmatrix}
\i k_x \\ \i k_y \\ \partial_z
\end{bmatrix} \frac{\i e^{\i k_{cz,j} |z-z'|}}{2 k_{cz,j}} \\
&= e^{\i k_{cz,j}z} \guh^{\mathrm{f}\uparrow} + e^{-\i k_{cz,j}z} \guh^{\mathrm{f}\downarrow}, \\
&= (\i k_{cz,j} \bd{j}_2 + \bd{j}_3) e^{\i k_{cz,j}} g_{\bd{u}}^{\mathrm{f}\uparrow} + (-\i k_{cz,j} \bd{j}_2 + \bd{j}_3) e^{-\i k_{cz,j}} g_{\bd{u}}^{\mathrm{f}\downarrow},
\end{split}
\end{align}
where
\begin{equation}
\guh^{\mathrm{f}\ast} = (\tau^\ast\i k_{cz,j} \bd{j}_2 + \bd{j}_3) g_{\bd{u}}^{\mathrm{f}\ast}, \quad
g_{\bd{u}}^{\mathrm{f}\uparrow} = 1_{\{z > z'\}}\frac{\i e^{-\i k_{cz,j}z'} }{2\omega^2 \rho_j k_{cz,j}}, \quad
g_{\bd{u}}^{\mathrm{f}\downarrow} = 1_{\{z < z'\}} \frac{\i e^{\i k_{cz,j}z'} }{2\omega^2 \rho_j k_{cz,j}}.
\end{equation}

The reaction field of the Green's function is defined by
\begin{equation}
\gu^\mathrm{r} = {\bd{g}}_{\bd{u}} - \delta_{j,t} {\bd{g}}_{\bd{u}}^{\mathrm{f}}
\end{equation}
in each layer.
In a fluid layer, the reaction field Green's function $\gu^{\mathrm{r}}$ satisfies the acoustic wave equations
\begin{align}
\begin{split}
\lambda \nabla \nabla \cdot \gu^{\mathrm{r}} + \omega^2 \rho \gu^{\mathrm{r}} &= \vec{0}, \\
\nabla \times \gu^{\mathrm{r}} &= \vec{0},
\end{split}
\end{align}
like in \cref{eq-ewq-fluid-acoustic}.
When treated as ordinary differential equations of $z$ in the frequency domain, the general solutions to these equations are
\begin{equation}\label{eq-ewq-fluid-curlfree-general}
\gur = (\i k_{cz} \bd{j}_2 + \bd{j}_3) e^{\i k_{cz} z} \psi^{\uparrow} + (-\i k_{cz} \bd{j}_2 + \bd{j}_3) e^{-\i k_{cz} z} \psi^{\downarrow},
\end{equation}
where the scalars $\psi^{\ast}$ do not depend on $z$.
We take the weaker form
\begin{equation}
\gur = e^{\i k_{cz} z} \guh^{\mathrm{r}\uparrow} + e^{-\i k_{cz} z} \guh^{\mathrm{r}\downarrow}
\end{equation}
where $\guh^{\mathrm{r}\ast} = g_{\bd{u}}^{\mathrm{r} \ast} (\tau^\ast \i k_{cz} \bd{j}_2 + \bd{j}_3)$, satisfying the equations
\begin{align}\label{eq-ewq-fluid-acoustic-vector-equiv1}
\begin{split}
{}&{} \lambda \left( -k_{cz}^2 \J{2} + \i k_{cz}\J{3} + \i k_{cz} \J{4} + \J{5} \right)\guh^{\mathrm{r}\uparrow} e^{\i k_{cz} z} \\
+ {}&{} \lambda \left( -k_{cz}^2 \J{2} - \i k_{cz}\J{3} - \i k_{cz} \J{4} + \J{5} \right) \guh^{\mathrm{r}\downarrow} e^{-\i k_{cz} z} = \vec{0},
\end{split}
\end{align}
and
\begin{align}\label{eq-ewq-fluid-acoustic-vector-equiv2}
\begin{split}
(\kr^2 \J{2} + \i k_{cz} \J{4}) \guh^{\mathrm{r}\uparrow} e^{\i k_{cz} z} + (\kr^2 \J{2} - \i k_{cz} \J{4}) \guh^{\mathrm{r}\downarrow} e^{-\i k_{cz} z} = \J{6} (\J{6} + \J{7} - \partial_z \J{9} ) \guh^\mathrm{r} &= \vec{0}, \\
(\kr^2 \J{1} + \J{5}) (\guh^{\mathrm{r}\uparrow} e^{\i k_{cz} z} + \guh^{\mathrm{r}\downarrow} e^{-\i k_{cz} z}) = \J{7} (\J{6} + \J{7} - \partial_z \J{9} ) \guh^\mathrm{r} &= \vec{0},
\end{split}
\end{align}
which are the vector versions of \cref{eq-ewq-fluid-acoustic-equiv1} and \cref{eq-ewq-fluid-acoustic-equiv2}, respectively.
Then, let
\begin{equation}
\guh^{\ast} = \delta_{j,t} \guh^{\mathrm{f} \ast} + \guh^{\mathrm{r}\ast}.
\end{equation}
One can verify that $\guh^{\ast}$ satisfies the same equations as the above of $\guh^{\mathrm{r}\ast}$ within each layer when provided $z \ne z'$.
In a solid layer, the reaction field Green's function $\gu^\mathrm{r} = \gu$ satisfies the homogeneous elastic wave equation
\begin{equation}
(\lambda+\mu)\nabla \nabla \cdot \gu^\mathrm{r} + \mu \nabla^2 \gu^\mathrm{r} + \omega^2 \rho \gu^\mathrm{r} = \bd{0}
\end{equation}
like in \cref{eq-ewq-Gr}.
In the frequency domain we again have the general solution
\begin{align}\label{eq-homo-ewq-vector-general}
\begin{split}
\gur = \sum_{\ast \in \{ \uparrow, \downarrow \}} \left( (-\tau^\ast \i k_{sz} \J{1} + \J{4}) e^{\tau^\ast\i k_{sz} z} + (\tau^\ast \i k_{cz} \J{2} + \J{3}) e^{\tau^\ast \i k_{cz} z} \right) \bd{x}^{\mathrm{r}\ast},
\end{split}
\end{align}
where $\bd{x}^{\mathrm{r}\ast}$ are $3\times 1$ vectors within each layer independent from $z$.
This repeats the result of \cref{eq-homo-ewq-general}.

The interface equations and the radiation equations are interpreted as the vector versions of the tensor case we've discussed in \Cref{section-Gf-source-in-solid}, just by replacing the unknowns from tensors to vectors, and replacing the free-space contribution from $ \widehat{\bd{G}}^{\mathrm{f}\ast}$ to $\guh^{\mathrm{f}\ast}$.
Therefore, by \Cref{thm-vector-solution-filtering}, the solution to the unknown vectors, which is known uniquely determined by the restricting equations (since the coefficients of unknowns are shared with the tensor version), has the vector basis representation, i.e. each $\guh^{\mathrm{r}\ast} \in \mathfrak{r}_{1 \times 1}(\mathbb{F}_0)$ and each $\bd{x}^{\mathrm{r}\ast} \in \mathfrak{r}_{1 \times 1}(\mathbb{F}_0)$.

Finally we use the vector basis to reinterpret the interface equations.
Suppose in the fluid layers
\begin{equation}
\guh^{\mathrm{r}\ast} = g_{\bd{u},2}^{\mathrm{r}\ast} \bd{j}_2 + g_{\bd{u},3}^{\mathrm{r}\ast} \bd{j}_3.
\end{equation}
Since the displacement is curl-free, the general solution form \cref{eq-ewq-fluid-curlfree-general} implies
\begin{equation}
g_{\bd{u},2}^{\mathrm{r}\ast} = \tau^\ast \i k_{cz} g_{\bd{u},3}^{\mathrm{r}\ast}.
\end{equation}
Hence we have the reaction field decomposition in fluid layers
\begin{equation}
\guh = (\i k_{cz} \bd{j}_2 + \bd{j}_3) e^{\i k_{cz}} g_{\bd{u}}^{\uparrow} + (-\i k_{cz} \bd{j}_2 + \bd{j}_3) e^{-\i k_{cz}} g_{\bd{u}}^{\downarrow},
\end{equation}
where
\begin{equation}
g_{\bd{u}}^{\ast} = \delta_{j,t} g_{\bd{u}}^{\mathrm{f}\ast} + g_{\bd{u},3}^{\mathrm{r} \ast}.
\end{equation}
Then suppose in the solid layers
\begin{equation}
\bd{x}^{\mathrm{r}\ast} = x_{2}^{\ast} \bd{j}_2 + x_{3}^{\ast} \bd{j}_3.
\end{equation}
The general solution in solid layers \cref{eq-homo-ewq-vector-general} is then simplified as
\begin{equation}
\guh = \widehat{\bd{g}}_{\bd{u}}^{\mathrm{r}} = \sum_{\ast \in \{\uparrow,\downarrow\}}
\left( (-\tau^\ast \i k_{sz} \bd{j}_3 - \kr^2 \bd{j}_2) x_3^\ast e^{\tau^\ast \i k_{sz} z}
+ (\tau^\ast \i k_{cz} \bd{j}_2 + \bd{j}_3) x_2^\ast e^{\tau^\ast \i k_{cz} z} \right),
\end{equation}
i.e. rows of $\bd{x}^{\mathrm{r}\ast}$ again reveals the wave decomposition in the solid media, with variable $x_2^\ast$ for the S-wave and $x_3^\ast$ for the P-wave.
For the simplification of the interface equations and the radiation equations with the vector basis $\bd{j}_2$ and $\bd{j}_3$ and their coefficients, it suffices to repeat the matrix forms used in \cref{eq-ewq-t33}, \cref{eq-ewq-u3}, \cref{eq-ewq-t31t32} and \cref{eq-ewq-u1u2}.
(a) $\lbr \mathcal{T}_{33} \rbr = 0$ is equivalent to
\begin{align}
\left\lbr (\lambda \J{4} + \gamma \partial_z \J{2}) \guh \right\rbr = \vec{0}.
\end{align}
In solid it's expanded as the terms in the brackets of \cref{eq-ewq-t33-J} multiplied by $\bd{j}_2$
\begin{equation}
(\lambda \J{4} + \gamma \partial_z \J{2}) \guh
= \left( \sum_{\ast \in \{\uparrow, \downarrow \}} -2\mu \tau^\ast \i k_{sz} \kr^2 e^{\tau^\ast \i k_{sz} z} x_3^\ast + (-\lambda \kr^2 - \gamma k_{cz}^2) e^{\tau^\ast \i k_{cz} z} x_2^\ast \right) \bd{j}_2,
\end{equation}
while in fluid
\begin{equation}
(\lambda \J{4} + \gamma \partial_z \J{2}) \guh
= \left( \sum_{\ast \in \{\uparrow, \downarrow \}} -\omega^2 \rho e^{\tau^\ast \i k_{cz} z}g_{\bd{u}}^\ast \right) \bd{j}_2.
\end{equation}
(b) $\lbr u_3 \rbr = 0$ is equivalent to
\begin{equation}
\left\lbr \J{2} \guh \right\rbr = \vec{0}.
\end{equation}
In solid the terms inside the brackets are expanded as
\begin{equation}
\J{2} \guh = \left( \sum_{\ast \in \{\uparrow, \downarrow \}} -\kr^2  e^{\tau^\ast \i k_{sz} z}x_3^\ast + \tau^\ast \i k_{cz} e^{\tau^\ast\i k_{cz} z} x_2^\ast \right) \bd{j}_2
\end{equation}
and in fluid
\begin{equation}
\J{2} \guh = \left( \sum_{\ast \in \{\uparrow, \downarrow \}} \tau^\ast \i k_{cz} e^{\tau^\ast\i k_{cz} z} g_{\bd{u}}^\ast \right) \bd{j}_2.
\end{equation}
(c) $\lbr \mathcal{T}_{31} \rbr = \lbr \mathcal{T}_{32} \rbr = 0$ is equivalent to
\begin{equation}
\left \lbr \mu \left( \partial_z \J{1} + \J{3}\right) \guh \right \rbr = \vec{0}
\end{equation}
which does not influence the fluid, and the solid side should satisfy
\begin{equation}
\mu \left( \partial_z \J{1} + \J{3}\right) \guh = \left(\sum_{\ast \in \{\uparrow, \downarrow \}} \mu (k_{sz}^2 - \kr^2) e^{\tau^\ast \i k_{sz} z} x_3^\ast + 2 \mu \tau^\ast \i k_{cz} e^{\tau^\ast\i k_{cz} z} x_2^\ast \right) \bd{j}_3 = \vec{0}.
\end{equation}
(d) $\lbr u_1 \rbr = \lbr u_2 \rbr = 0$ is equivalent to
\begin{equation}
\left \lbr \J{1} \guh \right \rbr = \vec{0}.
\end{equation}
In solid we have
\begin{equation}
\J{1} \guh = \left (\sum_{\ast \in \{\uparrow, \downarrow \}} -\tau^\ast \i k_{sz} e^{\tau^\ast \i k_{sz} z}x_3^\ast + e^{\tau^\ast\i k_{cz} z} x_2^\ast \right) \bd{j}_3
\end{equation}
and in fluid,
\begin{equation}
\J{1} \guh = \left (\sum_{\ast \in \{\uparrow, \downarrow \}} e^{\tau^\ast\i k_{cz} z} g_{\bd{u}}^\ast \right) \bd{j}_3.
\end{equation}
The above equations are exactly the variants of $T_1, T_3, T_6$ and $T_9$ defined before.
The 6-term group of unknowns previously in the solid-source case is gone.

\section{Conclusion}
In this paper, a matrix basis formulation is proposed for handling the dyadic Green's functions of the Maxwell's equations and of the elastic wave equation in layered media.
The formulation is then used to further simplify the representation and derivation of the Green's functions in both cases.
As a corollary, a vector basis formulation is introduced to handle the dyadic Green's function of the elastic wave equation with source in fluid media.

\end{document}